 \newcommand\x{\mathbf{x}}
 \newcommand\I{J}
  \newcommand\J{\mathbf{J}}
  \renewcommand\d{\mathbf{d}}
  \newcommand\s{\mathbf{s}}
    \newcommand\cc{\mathbf{c}}
        \newcommand\C{\mathcal{C}}
        \newcommand\OO{\mathcal{O}}
        \newcommand\A{\mathcal{A}}
        \newcommand\B{\mathcal{B}}
 \newcommand\bx{\boldsymbol{\xi}}
  \newcommand\bX{\boldsymbol{\Xi}}
  \newcommand\bI{\mathcal{J}}
   \newcommand\bd{\boldsymbol{\delta}}
  \newcommand\var{\mathrm{Var}}
 \newcommand\uu{\mathbf{u}}
 \newcommand\R{\mathbb{R}}
  \newcommand\N{\mathbb{N}}
  \newcommand\Ex{\mathbb{E}}
  \newcommand\K{\mathcal{K}}
  \newcommand{\notimplies}{%
  	\mathrel{
  		\ooalign{\hidewidth$\not{\phantom{=}}$\hidewidth\cr$\implies$}
  	}
  }
  \newcommand\be{\begin{enumerate}}
  \renewcommand\ne{\end{enumerate}}
  \DeclareMathOperator*{\argmax}{arg\,max}
   \DeclareMathOperator*{\argmin}{arg\,min}
\newtheorem{theorem}{Theorem}[section]
\newtheorem{lemma}[theorem]{Lemma}
\newtheorem{corollary}[theorem]{ Corollary}
\newtheorem{proposition}[theorem]{Proposition}
\theoremstyle{definition}
\newtheorem{definition}[theorem]{Definition}
\theoremstyle{remark}
\newtheorem{remark}{Remark}[section]
\begin{document}

\title{Source Localisation Using Binary Measurements}
	
\author{ 
\IEEEauthorblockN{Daniel D. Selvaratnam\IEEEauthorrefmark{1}\textsuperscript{1}, Iman Shames\IEEEauthorrefmark{1}, Jonathan H. Manton\IEEEauthorrefmark{1}, \IEEEmembership{Fellow,~IEEE}, and Branko Ristic\IEEEauthorrefmark{2}} \\
\IEEEauthorblockA{\IEEEauthorrefmark{1} Dept. Electrical and Electronic Engineering, University of Melbourne} \\
\IEEEauthorblockA{\IEEEauthorrefmark{2} School of Engineering, RMIT University}
\thanks{This research is supported by DST Group under Collaborative Research Agreements MYIP \#6874, MYIP \#5923, and by the Defence Science Institute as an initiative of the State Government of Victoria.}
\thanks{\textsuperscript{1} Corresponding author: \texttt{dselvaratnam@student.unimelb.edu.au}.}
\thanks{Preprint submitted to \emph{IEEE Transactions on Signal Processing.}}
}


\maketitle

\begin{abstract}
	
This paper considers the problem of localising a stationary signal source using a team of mobile agents which only take binary measurements.  Background false detection rates and missed detection probabilities are incorporated into the framework. A Bayesian estimation algorithm that discretises the search environment is employed, and analytical convergence and consistency results for this are derived. Fisher Information is then used as a metric for the design of optimal agent geometries. Knowledge of the probability of detection as a function of the source and agent locations is assumed in the analysis, with special attention given to range-dependent functions. The behaviour of the algorithm under inexact knowledge of the probability of detection is also analysed. Finally, simulation results are presented to demonstrate the effectiveness of the algorithm.
\end{abstract}

\section{Introduction}
\label{sec:intro}
Source localisation involves estimating the location of a signal source using measurements from a set of available sensors. Regardless of its type, a single bit constitutes the minimum amount of information that can be extracted from a signal. For example, when dealing with chemical or radiological sources, this may correspond to detecting the presence or absence of particles of interest \cite{ristic2016,vergassola2007}. 
In other situations, sensors may be required to process raw measurement data locally and report a binary outcome to a fusion centre \cite{chamberland2003, chen2006}. In general, any continuous or discrete-valued signal can be converted into a binary one via the use of a threshold. 
This is often desirable for applications with limited resources, because binary data demands less memory, communication bandwidth, and energy from the agents involved \cite{niu2004}. In this paper, we specifically consider the localisation of a stationary source using binary measurements obtained from a team of mobile agents. 

Consistent with standard practice, the measurements are treated as random variables taking values in $\{0,1\}$, to capture the effects of sensor noise and environmental uncertainty. Given suitable models for the signal propagation and sensors, the probability of obtaining a detection (i.e. of measuring a 1) becomes a well-defined function of the source and agent locations. See \cite{vijayakumaran2007, ristic2016, ristic2015a,shoari2014,niu2004} and Section \ref{sec:sim} of this paper for specific examples of such functions constructed for different types of signals and sensors. A novel aspect of our analysis is that it assumes an arbitrary probability-of-detection function, subject to mild conditions. Thus, the algorithms and results in the paper apply to a large class of measurement models and localisation scenarios. The probability-of-detection function is initially assumed to be fully known, but this is later relaxed by analysing the performance of the algorithm when only an envelope for this function is known. Background false detection rates and missed detection probabilities are incorporated naturally into our framework. 

Both Bayesian and classical parameter estimation techniques have been applied to solve the source localisation problem. We adopt the former, which has the advantage of incorporating prior knowledge about the source location, and of maintaining an entire posterior probability distribution rather than just a single estimate. Furthermore, a Bayesian framework permits the recursive addition of new measurements to update the posterior, without reprocessing past measurements. A disadvantage of this approach is that every iteration requires the computation of integrals that, in general, have no analytic solution. We obtain a tractable approximation by discretising the exploration region, thereby replacing the integrals with sums and generating a discrete posterior instead of a continuous one. This technique is well-known, however we believe the accompanying analysis to be novel. In particular, we explicitly consider the effects of finite discretisation by identifying points at which the discrete posterior is guaranteed to vanish asymptotically, and establish a relationship between the decay and Kullback-Leibler (KL) divergence. The analysis leads directly to conditions on measurement locations that guarantee sufficient information is being extracted by the agents. We then extend this by choosing measurement locations to maximise the determinant of the Fisher Information Matrix (FIM) \cite[Section 4.3.3.1]{vantrees2013}. This is a widely adopted performance criterion known as \emph{D-optimality} \cite{chaloner1995}. Focusing on the case where the probability of detection depends solely on distance, the resulting D-optimal geometries mirror the results of \cite{bishop2010} for range-only sensors. We then show how the knowledge of these geometries can be exploited via a control strategy by guiding the agents into formation about an estimated source location. 

Importance sampling is an alternative Bayesian technique which uses random sampling to numerically evaluate the required integrals \cite[Chapter 14]{robert2004}, \cite{ristic2004}. The posterior is approximated by a weighted set of samples or \emph{particles}, and well known results prove the convergence of this approximation to the true posterior as the number of particles approaches infinity. In practice however, only a finite number of particles can ever be used. In Section \ref{sec:equivalence}, we establish that importance sampling with a finite number of particles can be treated as a special case of discretising the exploration region, under the appropriate choice of discretisation points and prior. Our approach is more general because it allows the discretisation points to be chosen arbitrarily. Another key difference is that we analyse convergence over time, using only a finite number of particles. Particle filtering~\cite{ristic2004} extends important sampling to estimate a time-varying state based on an assumed dynamic model. Particle filtering does not fit within the framework of this paper, because it requires the particles to be propagated according to the dynamic model, and re-sampled every time-step. However, when only a stationary source is involved, nothing is gained by performing these additional steps.  Standard importance sampling/discretisation therefore remains a more appropriate choice for the problem at hand, and the simulation results in Section \ref{sec:sim} illustrate this.

A preliminary version of the results presented in this paper appeared in the conference proceedings \cite{selvaratnam2017b}, which developed limited posterior convergence results focusing on two special cases: measurements taken at a single location with an arbitrarily located source, and measurements taken at a periodic sequence of locations assuming a source coincident with one of the chosen discretisation points. In this journal paper, we extend the latter to include an arbitrarily located source, and strengthen all the results to almost-sure convergence. Other new theoretical developments include the relationship with KL divergence, D-optimal location optimisation, and the analysis relating to inexact knowledge of the probability of detection.
We now present a brief review of other relevant works in the literature, dividing them into Bayesian and classical approaches. 

A Bayesian approach is adopted in \cite{vergassola2007} to localise a chemical source using a single mobile agent which detects the presence or absence of an odour. As in our own work, the search region is discretised to approximate the posterior, however a theoretical convergence analysis is not offered. Rather, the focus of the paper is on a search strategy based on maximising the rate of entropy reduction.
Importance sampling is employed in \cite{ristic2016} for source localisation with binary measurements, using a propagation model based on turbulent dispersion in the atmosphere. Their approach accommodates an unknown particle release rate by using Rao-Blackwellisation \cite{ristic2017} to estimate it explicitly. The same Bayesian algorithm underpins \cite{vergassola2007, ristic2016, ristic2017}, and our own work. We emphasize that our contribution is not to propose a new estimation algorithm, but rather to provide a rigorous treatment of the inevitable effects of discretisation, supplemented with numerical results. A search for multiple stationary targets is considered in \cite{hu2013a}, which considers a discrete environment to begin with, and assumes the agents directly observe the occupancy state of each cell with given false and missed detection probabilities. Since binary measurements are typically generated by means of a threshold, several works address the problem of designing threshold levels. These include \cite{ristic2015a}, which studies the best achievable localisation accuracy using a binary sensor network, under a Gaussian plume propagation model. Threshold levels and sensor placement are investigated using the Bayesian Information Matrix (BIM), and the resulting theoretical error bounds are compared with the performance of the Metropolis-Hastings estimation algorithm. The tracking of a moving source using binary measurements is considered in \cite{vemula2007}, which uses particle filtering to estimate the source location, and proposes a heuristic for adaptively designing sensor threshold levels. This is extended to multi-bit measurements in \cite{ozdemir2008}, which focuses on adaptively designing quantisation thresholds based on the Bayesian Information Matrix. 

Classical approaches treat the source location as a deterministic but unknown parameter, rather than a random variable. They tend to focus on constructing estimators rather than maintaining a probability distribution. A maximum likelihood estimator is proposed in \cite{vijayakumaran2007} for localising a diffusive source using binary measurements. That algorithm seeks to estimate a two-dimensional source location, time of signal emission, and several other model parameters via Fisher Scoring, a modified Newton method for maximising the likelihood function. Convergence guarantees are obtained as the number of sensors goes to infinity. Since each iteration requires reprocessing the entire batch of measurements, \cite{vijayakumaran2007} also proposes a real-time approximate algorithm to avoid this.  We compare the complexity and numerical performance of these maximum likelihood approaches with our own in Section \ref{sec:sim}. A set of different estimators are constructed in \cite{shoari2010} without the use of any probability of detection model, but assuming noise free detections. Such model independent approaches clearly require less prior information, but typically display worse performance \cite{shoari2014}. As in the Bayesian case, the design of binary quantisation thresholds based on the FIM is studied in \cite{shoari2014}. Thresholds for multi-bit quantisation are studied in \cite{niu2006}, which also compares the resulting theoretical error bounds with the performance of the maximum likelihood estimator and a second estimator that takes a weighted average of the sensor locations. 

The remainder of this paper is structured as follows. The problem is formulated mathematically in Section \ref{sec:prob}. The estimation algorithm is then developed and analysed in Section \ref{sec:theorems}. Section \ref{sec:fish} derives D-optimal measurement locations, and Section \ref{sec:lambda} considers the implications of having inexact knowledge of the probability-of-detection function. A numerical example and simulation results are presented in \ref{sec:sim}. Closing remarks are then made in Section \ref{sec:conc}.
\section{Problem formulation}
\label{sec:prob}
For the remainder of this paper, we adopt the convention $ \N = \{1,2,...\}$ and define $\N_k := \{1,...,k\}$. We also use the notation $(a_k)_{k \in \N} \subset A$ to denote $a_k \in A$ for all $k \in \N$. 

Consider a team of $N$ agents exploring $\R^{q}$, where \hbox{$q \in \{2,3\}$.} 
Let agent $i$ have position $\x_i(t) \in \R^{q}$, which evolves in continuous time. We assume that all agents know their own position with respect to the same co-ordinate frame, and are equipped with identical sensors. The agents must search a compact region $S \subset \R^{q}$ for a source located at $\s \in S$.
Together, the team of agents take a sequence of measurements $(d_k)_{k \in \N} \subset \{0,1\}$ at a corresponding sequence of locations $(\bx_k)_{k \in \N} \subset \R^q$. The measurement pairs $(\bx_k,d_k)$ are transmitted in real-time to a fusion centre, where they are processed on arrival. The subscripts $k$ index the measurements according to the order in which are processed by the fusion centre. Note that the fusion centre is agnostic to the identity of the observing agent. Thus $\bx_k \in \{ \x_i(t_k) \mid i \in \N_N\}$, where $t_k \geq 0$ denotes the time at which reading $d_k$ was taken.  

We model $d_1,d_2,...$ as random variables that are conditionally independent of each other, given the source location.
We assume the probability of receiving a detection is a known continuous function $\ell: \R^{q} \times \R^{q} \to (0,1)$ of the source and agent locations. Initially, we make no further assumptions about $\ell$. Let $\ell(\R^q,\R^q) \subset (0,1)$ denote its image. Observe that there is always some non-zero probability of failing to detect the signal, as well as a non-zero background false detection probability regardless of where the source is.
 Having defined $\ell$, the probability of obtaining the reading $d_k$ from an agent at position $\bx_k$ when the source location is $\s$, is given by the likelihood function
\begin{equation} g(d_k \mid \s; \bx_k) = \ell(\s,\bx_k)^{d_k} \left[ 1 - \ell(\s,\bx_k) \right]^{1 - d_k}. \label{eq:gfun} \end{equation}

\section{Estimation of source location}
\label{sec:theorems}
The estimation algorithm is developed in this section. 
We treat $\s$ as a random variable, drawn from some prior distribution $p_0$ over $S$. Bayesian techniques allow us to compute the posterior probability density of $\s$, given the history of measurements $d_{1:k} := (d_1,...,d_k)$ and corresponding agent poses $\bx_{1:k} = (\bx_1,...,\bx_k)$. Bayes rule gives us a recursive description of this posterior density
\begin{dmath}p_k(\s \mid d_{1:k} ; \bx_{1:k}) = \dfrac{g(d_k \mid \s; \bx_k) p_{k-1}(\s \mid d_{1:k-1} ; \bx_{1:k-1})}{ \displaystyle \int_S g(d_k \mid \s'; \bx_k) p_{k-1}(\s'\mid d_{1:k-1} ; \bx_{1:k-1})d \s'}, \label{eq:bayesmapcont} \end{dmath}
where the recursion is initialized with $p_0(\s)$. 

Although \eqref{eq:bayesmapcont} is exact, the integrals involved do not, in general, have a closed-form, analytic solution. In order to work with arbitrary $\ell$, the posterior must be approximated, and \eqref{eq:bayesmapcont} computed numerically. 
To tackle this, we discretise $S$ into a finite set of distinct points \hbox{$ \C := \{\cc_1,...,\cc_M\}$}, the elements of which we refer to as \emph{centres}. If it is known that $\s \in \C$, then this yields a discrete version of the Bayes recursion \eqref{eq:bayesmapcont},
\begin{dmath}\hat{p}_k(i \mid d_{1:k} ; \bx_{1:k}) = \dfrac{g(d_k \mid \cc_i; \bx_k) \hat{p}_{k-1}(i \mid d_{1:k-1} ; \bx_{1:k-1})}{\sum\limits_{j=1}^M g(d_k \mid \cc_j; \bx_k) \hat{p}_{k-1}(j \mid d_{1:k-1} ; \bx_{1:k-1})}, \label{eq:bayesmap} \end{dmath}
which is initialized with a discrete prior $\hat{p}_0(i)$. Without loss of generality, we assume $\hat{p}_0:\N_M \to (0,1)$, noting that any $\cc_i$ for which $\hat{p}_0(i) = 0$ can simply be omitted. For the more general case where $\s \in S$ is arbitrary, given a particular choice of centres, we can define a set of cells $C_1,...,C_M$, such that \be
\item each $C_i \subset S$ is connected, and $\cc_i \in C_i$ for all $i$
\item $S = \bigcup_{i=1}^M C_i$
\item $C_i$ and $C_j$ are interior disjoint for all $i \neq j$. \ne This lends the following interpretation to the discrete posterior: $$ \hat{p}_k(i \mid d_{1:k} ; \bx_{1:k}) \approx \mathrm{Pr}(\s \in C_i \mid d_{1:k}).$$
As an example, the centres and cells can be chosen to form a grid or, more generally, a Voronoi diagram. Alternatively, if the centres are sampled from a probability distribution, we show below that this corresponds to importance sampling with a finite number of particles under the appropriate choice of $\hat{p}_0$.
\subsection{Relationship to importance sampling}
\label{sec:equivalence}
Consider the posterior mean
$$ \hat{\s}_k = \int_S \s p_k(\s \mid d_{1:k} ; \bx_{1:k}) d \s .$$
This can be approximated numerically via
$$ \hat{\s}_k \approx  \frac{1}{M} \sum_{i=1}^M \s_i, $$
where each $\s_i \sim p_k(\s \mid d_{1:k} ; \bx_{1:k}) $. However, we do not have a closed form expression for $p_k$, so we are unable to sample from it directly. Importance sampling assumes the ability to sample from some other more convenient density $\phi:S \to [0,\infty)$, known as the \emph{importance density}. The importance density can be arbitrary, but its support must contain the support of $p_k$.  
The expectation can then be computed according to
$$ \hat{\s}_k \approx \sum_{i=1}^M \hat{w}^i_k \cc_i,$$ where
$ \cc_i \sim \phi(\s) $, and
$$ w^i_k = \frac{p_k(\cc_i \mid d_{1:k} ; \bx_{1:k})}{\phi(\cc_i)}, \quad \hat{w}^i_k = \frac{w_i}{\sum_{j=1}^M w_j}.$$
Here, the $\cc_i$ are referred to as \emph{particles}, and the $\hat{w}_i$ as \emph{weights}.  Note that $\hat{w}^i_k$ is the normalized version of $w^i_k$. 
Recalling \eqref{eq:bayesmapcont}, and defining $$\nu_k := \int_S g(d_k \mid \s'; \bx_k) p_{k-1}(\s'\mid d_{1:k-1} ; \bx_{1:k-1})d \s',$$ we see that
$$p_k(\cc_i \mid d_{1:k} ; \bx_{1:k}) = \dfrac{g(d_k \mid \cc_i; \bx_k) p_{k-1}(\cc_i \mid d_{1:k-1} ; \bx_{1:k-1})}{ \nu_k}.$$
The un-normalized weights therefore obey the recursive relationship
$$ w^i_k = \frac{g(d_k \mid \cc_i; \bx_k) p_{k-1}(\cc_i \mid d_{1:k-1} ; \bx_{1:k-1})}{ \nu_k \phi(\cc_i) } = \frac{g(d_k \mid \cc_i; \bx_k)}{ \nu_k } w^i_{k-1}.$$
Applying the normalization,
\begin{align*}
\hat{w}^i_k & = \dfrac{ g(d_k \mid \cc_i; \bx_k) w^i_{k-1} }{\nu_k \displaystyle \sum_{j=1}^M \dfrac{g(d_k \mid \cc_j; \bx_k) w^j_{k-1}}{ \nu_k }}  = \dfrac{ g(d_k \mid \cc_i; \bx_k) w^i_{k-1} }{\displaystyle \sum_{j=1}^M g(d_k \mid \cc_j; \bx_k) w^j_{k-1}}.
\end{align*}
Finally, letting $W_k = \sum_{j=1}^M w^j_k,$ we see that
\begin{align*}
\hat{w}^i_k & = \dfrac{\displaystyle g(d_k \mid \cc_i; \bx_k)  \frac{w^i_{k-1}}{W_{k-1}} }{\displaystyle \sum_{j=1}^M g(d_k \mid \cc_j; \bx_k) \frac{w^j_{k-1}}{W_{k-1}}} = \dfrac{ g(d_k \mid \cc_i; \bx_k) \hat{w}^i_{k-1} }{\displaystyle \sum_{j=1}^M g(d_k \mid \cc_j; \bx_k) \hat{w}^j_{k-1}}.
\end{align*}
Noting that this recursion is identical to \eqref{eq:bayesmap}, we see that the weight $\hat{w}^i_k$ obeys the same update rule as $\hat{p}_k(i \mid d_{1:k};\bx_{1:k})$. The initial weights are given by
\begin{equation} \hat{w}^i_0 = \frac{p_0(\cc_i)}{\displaystyle \phi(\cc_i)\sum_{j=1}^M \frac{p_0(\cc_j)}{\phi(\cc_j)}}. \label{eq:w0} \end{equation}
In Section \ref{sec:theorems}, no assumption is made about how the centres are chosen, and an arbitrary discrete prior is assumed. 
If we choose the $\cc_i \in \mathcal{C}$ by sampling from importance density, and initialize $\hat{p}_0 = \hat{w}^i_0$ above, then the discretised approach of Section \ref{sec:theorems} is identical to importance sampling. Thus, importance sampling with a finite number of particles becomes a special case of discretisation. 
This is stated formally below. 
\begin{theorem}[Importance Sampling] \label{th:equivalence}
	Let $\cc_i \sim \phi(\s)$ for all $i \in \N_M$, where $\phi:S \to [0,\infty)$ is a probability distribution that satisfies
	$$\phi(\s) = 0 \implies  \forall k \geq 0,\ p_k(\s \mid d_{1:k} ; \bx_{1:k}) = 0.$$ Furthermore, for all $i \in \N_M$, let
	$$\hat{p}_0(i) =  \frac{p_0(\cc_i)}{\displaystyle \phi(\cc_i)\sum_{j=1}^M \frac{p_0(\cc_j)}{\phi(\cc_j)}}. $$
	Then $\hat{p}_k(i \mid d_{1:k};\bx_{1:k}) = \hat{w}^i_k$ for all $i \in \N_M$ and all $k \geq 0$. 
\end{theorem}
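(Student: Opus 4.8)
The plan is to recognise that essentially all the computational work has already been carried out in the derivation preceding the statement, so that the theorem reduces to a clean induction on $k$. The two facts established above are exactly what is needed: first, the normalised weights $\hat{w}^i_k$ obey the recursion
\[
\hat{w}^i_k = \frac{g(d_k \mid \cc_i; \bx_k)\,\hat{w}^i_{k-1}}{\sum_{j=1}^M g(d_k \mid \cc_j; \bx_k)\,\hat{w}^j_{k-1}},
\]
which is term-for-term identical to the update rule \eqref{eq:bayesmap} satisfied by $\hat{p}_k(i \mid d_{1:k}; \bx_{1:k})$; second, the initial normalised weight $\hat{w}^i_0$ is given by \eqref{eq:w0}, which is precisely the expression assigned to $\hat{p}_0(i)$ in the hypothesis. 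Hence I would structure the proof as a straightforward induction.

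For the base case $k=0$, the hypothesis sets $\hat{p}_0(i)$ equal to the right-hand side of \eqref{eq:w0}, which is $\hat{w}^i_0$, so $\hat{p}_0(i) = \hat{w}^i_0$ for every $i \in \N_M$. For the inductive step, I would assume $\hat{p}_{k-1}(i \mid d_{1:k-1}; \bx_{1:k-1}) = \hat{w}^i_{k-1}$ for all $i \in \N_M$ and substitute this equality into the common recursion. Because $\hat{p}_k$ and $\hat{w}^i_k$ are each produced by applying the identical map — multiply the previous value by $g(d_k \mid \cc_i; \bx_k)$ and renormalise over $j \in \N_M$ — to inputs that coincide, their outputs coincide, giving $\hat{p}_k(i \mid d_{1:k}; \bx_{1:k}) = \hat{w}^i_k$. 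This closes the induction.

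The one point requiring genuine care, and what I would flag as the main obstacle, is the well-definedness of the weights throughout the recursion, since the argument manipulates ratios that must never involve division by zero. Each $w^i_k = p_k(\cc_i \mid \cdot)/\phi(\cc_i)$ requires $\phi(\cc_i) > 0$; this holds almost surely because each particle is drawn as $\cc_i \sim \phi$ and the set $\{\s : \phi(\s)=0\}$ carries no $\phi$-mass, so the event $\phi(\cc_i)=0$ has probability zero. The support condition $\phi(\s) = 0 \implies p_k(\s \mid \cdot) = 0$ is what guarantees the importance-sampling ratio is meaningful wherever $p_k$ is supported, so no mass of the target posterior is lost. Finally, the normalising denominators in both \eqref{eq:bayesmap} and the weight recursion are strictly positive: since $\ell$ maps into the open interval $(0,1)$, the likelihood $g(d_k \mid \cc_j; \bx_k) = \ell(\cc_j,\bx_k)^{d_k}[1-\ell(\cc_j,\bx_k)]^{1-d_k}$ is strictly positive for every $j$, and the strictly positive prior $\hat{p}_0$ keeps all iterates positive, so every denominator is a sum of positive terms. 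With these positivity facts secured, the induction goes through with no further work, and the claimed identity holds for all $i \in \N_M$ and all $k \geq 0$.
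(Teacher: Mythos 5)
Your proposal is correct and follows essentially the same route as the paper: the text preceding the theorem statement is the paper's proof, deriving exactly the weight recursion you cite (identical to the update \eqref{eq:bayesmap}) and the initialization \eqref{eq:w0}, with the conclusion then following by the induction you make explicit. Your added care about well-definedness (positivity of $\phi(\cc_i)$ almost surely and of the normalising denominators via $\ell$ mapping into $(0,1)$) is a sound elaboration of points the paper leaves implicit, not a departure from its argument.
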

\subsection{Definitions of fundamental quantities}
Informally, the requirement of posterior consistency means that the posterior should become increasingly concentrated about the source location as $k \to \infty$. A precise definition and discussion of consistency can be found in \cite[Section 4.1.1]{ghosh2006}. We adopt the following definition, specific to this problem.
\begin{definition}[Posterior Consistency] \label{def:consistency}
The posterior $\hat{p}_k$ is consistent if 
\begin{equation} \s \notin C_i \implies \lim_{k \to \infty} \hat{p}_k(i \mid d_{1:k} ; \bx_{1:k}) = 0 \text{ a.s.} \label{eq:def}, \end{equation}
where $C_i$ is the cell corresponding to centre $\cc_i$. 
\end{definition}
If the source does not lie on the boundary between two cells, \eqref{eq:def} is equivalent to 
$$ \s \in C_j \implies \lim_{k \to \infty} \hat{p}_k(j \mid d_{1:k} ; \bx_{1:k}) = 1 \text{ a.s.} .$$
The set
$$ \mathcal{O} := \left\{ i \in \N_M \mid \lim_{k \to \infty} \hat{p}_k(i \mid d_{1:k} ; \bx_{1:k}) = 0 \text{ a.s. } \right\}$$ is of clear interest, because consistency is also equivalent to ${\mathcal{O} = \{ i \in \N_M \mid \s \notin C_i \}}$. Note that for any pair of cells $(i,j)$, successive iterations of $\eqref{eq:bayesmap}$ up to time-step $n$ gives us 
\begin{equation} \dfrac{\hat{p}_n(i \mid d_{1:n} ; \bx_{1:n})}{\hat{p}_n(j \mid d_{1:n} ; \bx_{1:n})} = \dfrac{\hat{p}_0(i)}{\hat{p}_0(j)} \prod_{k=1}^n \frac{g(d_k \mid \cc_i; \bx_k)}{g(d_k \mid \cc_j; \bx_k)}. \label{eq:pratio} \end{equation}
We accordingly define the likelihood ratio \begin{equation} Z^{(i,j)}_k :=  \frac{ g(d_k \mid \cc_i; \bx_k)}{ g(d_k \mid \cc_j ; \bx_k)}, \label{eq:Xi} \end{equation}
which is the ratio of the probability of obtaining a reading $d_k$ with the source at $\cc_i$ to the probability of obtaining it with the source at $\cc_j$. The expected value of the log-likelihood ratio conditioned on the source location 
\begin{align} \mu^{(i,j)} _ k  & :=  \Ex\left[ \ln Z^{(i,j)}_k \mid \s \right] \\ & = \sum_{d=0}^1 \ln \left[\frac{ g(d \mid \cc_i; \bx_k)}{ g(d \mid \cc_j ; \bx_k)}\right]g(d \mid \s; \bx_k), \label{eq:rawmuk} \end{align} 
will play a key role in the subsequent analysis. Note that the value $\mu^{(i,j)}_k$ depends on $\cc_i, \cc_j, \s$ and $\bx_k$. 
Following from \eqref{eq:rawmuk}, this relationship can be written as
\begin{equation} \mu^{(i,j)} _ k = \mu\big( \ell(\cc_i, \bx_k), \ell( \cc_j, \bx_k),\ell(\s, \bx_k)\big), \label{eq:muk} \end{equation}
where $\mu: (0,1)^3 \to \R$,
\begin{align} \mu(x,y,z) & := z \ln \left( \frac{x}{y} \right) +  (1-z)\ln \left( \frac{1-x}{1-y} \right) \label{eq:mufun} \\
& = \ln \left( \dfrac{x^z(1-x)^{(1-z)}}{y^z(1-y)^{(1-z)}} \right). \label{eq:1log}
\end{align}
In this problem, the expected log-likelihood ratio is intimately related to the Kullback-Leibler (KL) divergence $D(P || Q)$, which is a measure of the information lost when using a probability distribution $Q$ to approximate another distribution $P$ \cite[Section 2.1]{burnham2003}. Define
\begin{align}
\K(\s || \x ; \bx_k) & := D\bigg(g( \cdot \mid \s ; \bx_k) \ \bigg| \bigg| \  g(\cdot \mid \x ; \bx_k) \bigg) \label{eq:KLD} \\
& = -\sum_{d=0}^1 g( d \mid \s ; \bx_k) \ln \left[ \frac{g( d \mid \x ; \bx_k)}{g( d \mid \s ; \bx_k)}\right] \nonumber \\
& = -\mu\big( \ell(\x,\bx_k), \ell(\s,\bx_k), \ell(\s,\bx_k) \big), \nonumber
\end{align}
which is the KL divergence \cite[Equation (2.26)]{cover2005} between the true conditional probability distribution for $d_k$, and a distribution which takes $\x \in \R^q$ as the source location. Noting the identity
\begin{equation} \mu(x,z,z) - \mu(y,z,z) = \mu(x,y,z), \label{eq:muprop}\end{equation}
we obtain
\begin{equation} 
\mu^{(i,j)} _ k = \K(\s || \cc_j ; \bx_k) - \K(\s || \cc_i ; \bx_k).
\end{equation}
Now consider a sequence of measurements $\d = (d_1,...,d_n)$ taken at the corresponding locations $\bx_{1:n} = (\bx_1,...,\bx_n)$. The probability distribution of $\d$, given source location $\s$ and measurement locations $\bx_{1:n} \in \R^{nq}$, is
\begin{align} 
G(\d \mid \s ; \bx_{1:n}) & = \prod_{k=1}^n g(d_k \mid \s;\bx_k). \label{eq:p}
\end{align}
The extension of \eqref{eq:KLD} to the distribution of a sequence of measurements is then given by
\begin{align}
\K(\s || \x ; \bx_{1:n}) & := D\bigg(G( \cdot \mid \s ; \bx_{1:n}) \ \bigg| \bigg| \  G(\cdot \mid \x ; \bx_{1:n}) \bigg) \label{eq:KLdef} \\ \label{eq:KLadditivity}
& = \sum_{k=1}^n \K(\s || \x ; \bx_k),
\end{align}
which follows from the additivity property of KL divergence for independent distributions (a corollary of \cite[Theorem 2.5.3]{cover2005}). Thus
\begin{align} \sum_{k=1}^n \mu^{(i,j)}_k  = \K(\s || \cc_j ; \bx_{1:n})  - \K(\s || \cc_i ; \bx_{1:n}). \label{eq:KLDsum} \end{align}

\subsection{General posterior convergence results}
We now present the first theoretical results. We begin by establishing sufficient conditions for \hbox{$\hat{p}_k(i \mid d_{1:k} ; \bx_{1:k})$} to decay to zero at index $i$, as $k \to \infty$. 

\begin{theorem} Let $\s \in S$, and let $(\bx_k)_{k\in\N} \subset \R^q$ be a bounded sequence of measurement locations. If there exists a pair of cells $(i,j)$ and some $p > \frac{1}{2}$ such that 
		\begin{equation} \limsup_{n \to \infty} \left[ \frac{1}{n^p} \sum_{k=1}^{n}\mu^{(i,j)}_k \right] < 0, \label{eq:convergence} \end{equation} then under recursion \eqref{eq:bayesmap}, $\hat{p}_k(i \mid d_{1:k} ; \bx_{1:k}) \to 0$ almost surely as $k \to \infty$.
		\label{thm: estimator}
\end{theorem}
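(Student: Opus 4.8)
My plan is to work with the logarithm of the posterior ratio, since \eqref{eq:pratio} already isolates the whole dependence on the data in a product of likelihood ratios. Taking logs,
$$\ln\frac{\hat p_n(i\mid d_{1:n};\bx_{1:n})}{\hat p_n(j\mid d_{1:n};\bx_{1:n})} = \ln\frac{\hat p_0(i)}{\hat p_0(j)} + \sum_{k=1}^n \ln Z^{(i,j)}_k,$$
so it suffices to show that the partial sums $S_n := \sum_{k=1}^n \ln Z^{(i,j)}_k$ tend to $-\infty$ almost surely. Indeed, since $\hat p_n(j\mid d_{1:n};\bx_{1:n})\le 1$, the ratio tending to $0$ forces $\hat p_n(i\mid d_{1:n};\bx_{1:n})\to 0$.

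The first step is to split $S_n$ into its conditional mean and a fluctuation. Writing $Y_k := \ln Z^{(i,j)}_k - \mu^{(i,j)}_k$, the definition \eqref{eq:rawmuk} gives $\Ex[Y_k\mid\s]=0$, and
$$\frac{1}{n^p}S_n = \frac{1}{n^p}\sum_{k=1}^n \mu^{(i,j)}_k + \frac{1}{n^p}\sum_{k=1}^n Y_k.$$
The first term has strictly negative $\limsup$ by the hypothesis \eqref{eq:convergence}, so everything reduces to proving that the fluctuation term vanishes almost surely.

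The key technical step is therefore an almost-sure bound on the fluctuations. Because the $d_k$ are conditionally independent given $\s$ and take values in $\{0,1\}$, each $Y_k$ is a bounded function of $d_k$; the bound is uniform in $k$ because $(\bx_k)$ is bounded and $\ell$ is continuous into $(0,1)$, so $\ell(\cc_i,\bx_k)$ and $\ell(\cc_j,\bx_k)$ stay in a compact subinterval $[a,b]\subset(0,1)$ as $\bx_k$ ranges over the compact closure of the measurement locations. Hence $\ln Z^{(i,j)}_k$, and with it $Y_k$, is uniformly bounded, giving $\var(Y_k\mid\s)\le\sigma^2$ for some constant $\sigma^2$. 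The $Y_k$ are independent (conditionally on $\s$) with zero conditional mean, and taking $b_k:=k^p$ we have $\sum_{k\ge1}\var(Y_k\mid\s)/b_k^2 \le \sigma^2\sum_{k\ge1}k^{-2p}<\infty$ precisely because $p>\tfrac12$. Kolmogorov's convergence criterion then makes $\sum_k Y_k/b_k$ converge almost surely, and Kronecker's lemma yields $n^{-p}\sum_{k=1}^n Y_k\to0$ almost surely.

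Combining the two pieces gives $\limsup_{n\to\infty} n^{-p}S_n<0$ almost surely, so $S_n\to-\infty$ almost surely, the posterior ratio tends to $0$, and hence $\hat p_n(i\mid d_{1:n};\bx_{1:n})\to0$ almost surely. I expect the main obstacle to be selecting and justifying the correct limit theorem rather than the algebra: the summands are not identically distributed and the normalisation is $n^p$ rather than $n$, which rules out the ordinary strong law, and the hypothesis $p>\tfrac12$ is exactly the condition that makes the Kolmogorov variance series converge. The secondary point needing care is the uniform boundedness of $\ln Z^{(i,j)}_k$, which is where continuity of $\ell$ and boundedness of $(\bx_k)$ are genuinely used.
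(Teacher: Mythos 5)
Your proof is correct and takes essentially the same route as the paper: it too reduces via \eqref{eq:pratio} and the bound $\hat p_n(j\mid d_{1:n};\bx_{1:n})\le 1$ to showing $\prod_{k=1}^n Z^{(i,j)}_k \to 0$ almost surely, obtains the uniform bound on $\ln Z^{(i,j)}_k$ from compactness and the continuity of $\ell$ into $(0,1)$, and controls the centred fluctuations by Kolmogorov's variance criterion together with Kronecker's lemma, which is exactly the content of Lemmas \ref{lem:zeromean}--\ref{lem:infproducts} in Appendix \ref{app:infiniteprod}. The only difference is presentational: you inline the centring-and-fluctuation argument that the paper isolates in standalone appendix lemmas.
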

\begin{proof}
	Since we have assumed the co-domain of $\ell$ is $(0,1)$, this implies \begin{equation}\forall d \in \{0,1\}, \ \forall \s,\bx \in \R^q, \ \ g(d \mid \s; \bx) > 0.\label{eq:jpos} \end{equation} Recalling $\hat{p}_0(j) > 0$ for all $j$, recursion \eqref{eq:bayesmap} together with \eqref{eq:jpos} implies that $\hat{p}_k(j\mid d_{1:k} ; \bx_{1:k})>0$ for all $j \in \N_M$ and $k \in \N$. Thus \eqref{eq:pratio} is well defined, and for any pair $(i,j)$, we obtain
	\begin{eqnarray*} \dfrac{\hat{p}_n(i \mid d_{1:n} ; \bx_{1:n})}{\hat{p}_n(j \mid d_{1:n} ; \bx_{1:n})} = \dfrac{\hat{p}_0(i)}{\hat{p}_0(j)} \prod_{k=1}^n Z^{(i,j)}_k. \end{eqnarray*}
	Since $\hat{p}_n(j \mid d_{1:n} ; \bx_{1:n}) \leq 1,$ this yields the inequality
	\begin{equation}
	\hat{p}_n(i \mid d_{1:n} ; \bx_{1:n}) \leq  \dfrac{\hat{p}_0(i)}{\hat{p}_0(j)} \prod_{k=1}^n Z^{(i,j)}_k.\label{eq:ratio}
	\end{equation}
	
	Note that $Z^{(i,j)}_1, Z^{(i,j)}_2,...,$ are independent random variables because our measurements are independent. We also know $\s, \cc_1,...,\cc_M \in S$, where $S \subset \R^q$ is compact. Furthermore, the sequence $\bx_1, \bx_2,...$ is bounded, and therefore never leaves some compact subset $X \subset \R^q$. Since $\ell:\R^{q} \times \R^{q} \to (0,1)$ is continuous, it attains a minimum and maximum on $S \times X$ \cite[Theorem 4.16]{rudin1964-1}. Let \begin{equation} \ell_1 := \max \ell(S, X) < 1 \text{ and } \ell_0 := \min \ell(S, X) > 0. \label{eq:boundedl} \end{equation} Then \eqref{eq:gfun} implies \begin{dmath}\alpha {:=} \min \left\{ \frac{\ell_0}{\ell_1}, \frac{1-\ell_1}{1-\ell_0}  \right\} \leq \dfrac{g(d_k \mid \cc_i; \bx_k)}{g(d_k \mid \cc_j; \bx_k)}  \leq \max \left\{ \frac{\ell_1}{\ell_0}, \frac{1-\ell_0}{1-\ell_1}  \right\}{=:} \beta \label{eq:alpha} \end{dmath}
	for all $i,j,k$. Thus for all $i,j,k$,
	\begin{equation} 0 < \alpha \leq Z^{(i,j)}_k \leq \beta. \label{eq:bounds} \end{equation}
	Now if condition \eqref{eq:convergence} holds for some pair $(i,j)$ and $p > \frac{1}{2}$, then by definition
	$$\limsup_{n \to \infty} \frac{1}{n^p}\sum_{k=1}^n \Ex \left[ \ln Z^{(i,j)}_k \right] <0,$$
	and applying the result of Lemma \ref{lem:infproducts}, $\prod_{k=1}^n Z_k \to 0 \text{ a.s.}$ as $n \to \infty$. Equation \eqref{eq:ratio} then implies ${\hat{p}_n(i \mid d_{1:n} ; \bx_{1:n})} \to 0 \text{ a.s.. }$ 
\end{proof}
\begin{remark}
This result is similar to \cite[Theorem 1]{selvaratnam2017b}, which establishes convergence in probability of ${\hat{p}_k(i \mid d_{1:k} ; \bx_{1:k})}$ for $p = \frac{1}{2}$. Strengthening the requirement to $p > \frac{1}{2}$ allows us to obtain almost sure convergence. 
\end{remark}

Theorem \ref{thm: estimator} provides us with a sufficient condition ensuring that a cell index $i \in \OO$, and this condition requires finding an index $j$ for which $\sum_k \mu^{(i,j)}_k$ diverges at a sufficient rate. 

\begin{remark} Noting \eqref{eq:KLDsum}, if a pair $(i,j)$ satisfy condition \eqref{eq:convergence}, then approximating the source location with $\cc_j$ would asymptotically result in a lower KL divergence from $G(d_1,d_2... \mid \s; \bx_1,\bx_2,...)$ than approximating the source with $\cc_i$.  
\end{remark}

If the true source location coincides with some centre, Theorem \ref{thm: estimator} enables us to state a condition on the measurement location sequence $\bx_1,\bx_2,...$ that guarantees posterior consistency.
\begin{theorem}[Posterior Consistency] \label{th:consistency}
Let $(\bx_k)_{k\in \N} \subset \R^q $ be bounded, and let $\C = \{\cc_1,...,\cc_M\} \subset S$. Suppose $\cc_j = \s$ for some $j \in \N_M$. If $\forall i \neq j,\ \exists p>\frac{1}{2}$ such that
\begin{equation} \liminf_{n \to \infty} \frac{1}{n^p} \sum_{k=1}^n (\ell(\s,\bx_k) - \ell(\cc_i,\bx_k))^2 >0, \label{eq:no2close} \end{equation}
then $\hat{p}_k(j \mid d_{1:k}; \bx_{1:k}) \to 1$ a.s.. 
\end{theorem}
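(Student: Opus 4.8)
The plan is to reduce Theorem \ref{th:consistency} to repeated application of Theorem \ref{thm: estimator}. Fix the index $j$ for which $\cc_j = \s$. For each $i \neq j$ I would verify that the pair $(i,j)$ satisfies condition \eqref{eq:convergence}, so that $\hat{p}_k(i \mid d_{1:k} ; \bx_{1:k}) \to 0$ almost surely. Since the posterior produced by \eqref{eq:bayesmap} is normalised, $\sum_{i=1}^M \hat{p}_k(i \mid d_{1:k} ; \bx_{1:k}) = 1$ for every $k$, and a finite intersection of probability-one events again has probability one; hence on an event of probability one $\hat{p}_k(i \mid \cdots) \to 0$ for all $i \neq j$ simultaneously, forcing $\hat{p}_k(j \mid \cdots) = 1 - \sum_{i \neq j}\hat{p}_k(i \mid \cdots) \to 1$.

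The heart of the argument is to translate the hypothesis \eqref{eq:no2close}, which is phrased in terms of the squared gap $(\ell(\s,\bx_k) - \ell(\cc_i,\bx_k))^2$, into a statement about $\sum_k \mu^{(i,j)}_k$. Using $\cc_j = \s$ in the identity $\mu^{(i,j)}_k = \K(\s || \cc_j ; \bx_k) - \K(\s || \cc_i ; \bx_k)$ together with $\K(\s || \s ; \bx_k) = 0$, I obtain $\mu^{(i,j)}_k = -\K(\s || \cc_i ; \bx_k)$. Recalling that $\K(\s || \cc_i ; \bx_k)$ is exactly the KL divergence between the Bernoulli laws with parameters $\ell(\s,\bx_k)$ and $\ell(\cc_i,\bx_k)$, the crucial ingredient is a quadratic lower bound of Pinsker type,
\begin{equation*} \K(\s || \cc_i ; \bx_k) \geq 2\left(\ell(\s,\bx_k) - \ell(\cc_i,\bx_k)\right)^2, \end{equation*}
whose constant is universal and does not depend on $\bx_k$ nor require the bounds $\ell_0,\ell_1$ from \eqref{eq:boundedl}.

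With this bound in hand the rest is routine bookkeeping. For each $i \neq j$ select the exponent $p = p_i > \tfrac{1}{2}$ supplied by \eqref{eq:no2close}. Dividing the displayed inequality by $n^{p_i}$, summing over $k \leq n$, and passing to the $\liminf$ gives $\liminf_{n} \frac{1}{n^{p_i}}\sum_{k=1}^n \K(\s || \cc_i ; \bx_k) \geq 2\, \liminf_{n} \frac{1}{n^{p_i}} \sum_{k=1}^n (\ell(\s,\bx_k)-\ell(\cc_i,\bx_k))^2 > 0$. Since $\mu^{(i,j)}_k = -\K(\s || \cc_i ; \bx_k)$ and $\limsup(-a_n) = -\liminf a_n$, this is exactly $\limsup_n \frac{1}{n^{p_i}}\sum_{k=1}^n \mu^{(i,j)}_k < 0$, i.e.\ condition \eqref{eq:convergence} for the pair $(i,j)$ with exponent $p_i$. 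Theorem \ref{thm: estimator}, whose boundedness hypothesis on $(\bx_k)_{k \in \N}$ is also assumed here, then yields $\hat{p}_k(i \mid \cdots) \to 0$ almost surely, completing the reduction described in the first paragraph.

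I expect the main obstacle to be establishing the Pinsker-type lower bound and confirming that its constant may be taken uniform in $\bx_k$; everything else reduces to elementary manipulation of $\liminf$/$\limsup$ and the normalisation of the posterior. A secondary point worth verifying is that the per-index exponents $p_i$ need not coincide, but because Theorem \ref{thm: estimator} is invoked separately for each $i \neq j$, this causes no difficulty.
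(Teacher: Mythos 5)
Your proposal is correct and follows essentially the same route as the paper: the paper likewise reduces to Theorem \ref{thm: estimator} via $\mu^{(i,j)}_k = -\K(\s||\cc_i;\bx_k)$ and the quadratic lower bound $\K(\s||\cc_i;\bx_k) \geq 2(\ell(\s,\bx_k)-\ell(\cc_i,\bx_k))^2$. The step you flag as the main obstacle is handled in the paper by noting that the total variation distance between the two Bernoulli laws equals $|\ell(\s,\bx_k)-\ell(\cc_i,\bx_k)|$ and invoking Pinsker's inequality, whose constant $2$ is indeed universal and independent of $\bx_k$.
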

\begin{proof}
	For any $\x \in \R^q$, the total variation distance between distributions $g( \cdot | \s, \bx_k)$ and $g( \cdot | \x, \bx_k)$ is given by
	\begin{align*} \sup_{d \in \{0,1\}} |g( d| \s, \bx_k) - g( d | \x, \bx_k)| 
	= |\ell(\s,\bx_k) - \ell(\x,\bx_k)|.
	\end{align*}
	Pinsker's inequality \cite[Lemma 2.5]{tsybakov2009} is a lower bound on KL divergence, which then yields
	$$ \K(\s||\x ; \bx_k) \geq 2(\ell(\s,\bx_k) - \ell(\x,\bx_k))^2.$$
	By assumption $\cc_j = \s$, which implies $\K(\s||\cc_j ; \bx_k) = 0$ for any $\bx_k$. Thus, \eqref{eq:KLadditivity} and \eqref{eq:KLDsum} imply that for all $n \in \N$ and $i \in \N_M$,
	$$ \sum_{k=1}^n \mu^{(i,j)}_k  =  - \K(\s||\cc_i ; \bx_{1:n}) \leq - \sum_{k=1}^n 2(\ell(\s,\bx_k) - \ell(\cc_i,\bx_k))^2,$$ which in turn implies
	\begin{align*} \limsup_{n \to \infty} \frac{1}{n^p} \sum_{k=1}^{n}\mu^{(i,j)}_k & \leq \limsup_{n \to \infty} \frac{-2}{n^p} \sum_{k=1}^n (\ell(\s,\bx_k) - \ell(\cc_i,\bx_k))^2 \\
	& = -\liminf_{n \to \infty} \frac{2}{n^p} \sum_{k=1}^n (\ell(\s,\bx_k) - \ell(\cc_i,\bx_k))^2. \end{align*}
	If for all $i \neq j$, there exists $p > \frac{1}{2}$ such that \eqref{eq:no2close} holds, then the LHS of the above inequality is strictly negative. Theorem \ref{thm: estimator} then implies $\hat{p}_k(i \mid d_{1:k}; \bx_{1:k}) \to 0$ a.s. for all $i \neq j$. Since $\hat{p}_k$ is a probability distribution, this implies $\hat{p}_k(j \mid d_{1:k}; \bx_{1:k}) \to 1$ a.s.. 
\end{proof}
\begin{remark} \label{rem:indistinguishable}
If $ \ell(\cc_i,\bx_k) = \ell(\s,\bx_k)$, then having the source at $\cc_i$ yields the same probability of detection at $\bx_k$, as if the source was at $\s$. Thus, $\cc_i$ cannot be distinguished from $\s$ using measurements taken at $\bx_k$. Condition \eqref{eq:no2close} ensures the  agents take readings sufficiently often at locations which provide enough information to distinguish between cells.
\end{remark}
Next, we consider what happens when the source does not coincide with any centre.  
\subsection{Posterior convergence under periodic location sequences} \label{sec:nperiodic}
Analysing the general case where $\s \notin \C$ is difficult when considering completely arbitrary agent location sequences. We therefore restrict our attention to those that are periodic. Many bounded sequences of practical interest are either periodic, or converge to one that is. Furthermore, the effectiveness of any finite sequence $\bx_1,...,\bx_n$ can be analysed by considering a periodic sequence for which $\bx_1,...,\bx_n$ constitutes a single period. 

\begin{definition} A sequence $(\bx_k)_{k\in\N}$ is $n$-periodic for $n \in \N$  iff $\bx_k = \bx_{k+n}$ for all $k$. 
\end{definition} 
Examples of this include \be[i)]
\item a single agent moving in a periodic trajectory, taking measurements at the same locations every $n$ time-steps
\item a team of $n$ agents remaining stationary, and taking measurements in a fixed order.
\end{enumerate}
Any $n$-periodic location sequence is fully specified by the vector $\bx_{1:n} = (\bx_1,...,\bx_n) \in \R^{nq}$. 
For $n$-periodic trajectories, $\sum_{k=1}^n \mu^{(i,j)}_k < 0$ is a sufficient condition for \eqref{eq:convergence}. 
\begin{remark}
	The assumption of $n$-periodicity can be weakened. Partition an arbitrary sequence $\bx_1,\bx_2,...$ into blocks of some fixed length $L \in \N$, and permute the measurement order within these blocks. This has no effect on condition \eqref{eq:convergence}. Thus, if there exists such permutation which yields an $n$-periodic sequence, the results of this section (and Section \ref{sec:lambdarho}) apply without modification to the original sequence. 
\end{remark}
Equation \eqref{eq:KLDsum} then implies that, in the limit, the algorithm selects the indices of centres which, when treated as the source location, minimise KL divergence from the true measurement probability distribution. This is stated precisely below. 
\begin{theorem}
	\label{th:general}
	Let $\s \in S$ and let $(\bx_k)_{k\in \N} \subset \R^q$ be $n$-periodic for some $n \in \mathbb{N}$. Then \begin{equation} \OO^c \subset \argmin_{i \in \N_M} \K(\s||\cc_i ; \bx_{1:n} ) =: \B. \label{eq:setB} \end{equation}
\end{theorem}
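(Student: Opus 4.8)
The plan is to prove the contrapositive: I will show that every index $i \notin \B$ necessarily lies in $\OO$, so that its posterior mass vanishes almost surely. Since $\B$ is exactly the complement of the set of such $i$ inside $\N_M$, this establishes $\OO^c \subset \B$ directly.

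First, since $\N_M$ is finite and $i \mapsto \K(\s || \cc_i ; \bx_{1:n})$ is real-valued, the minimising set $\B$ is nonempty; I fix any $j \in \B$. If $i \notin \B$, then by definition of $\B$ we have $\K(\s || \cc_j ; \bx_{1:n}) < \K(\s || \cc_i ; \bx_{1:n})$, so the identity \eqref{eq:KLDsum} evaluated over one period gives
$$\sum_{k=1}^n \mu^{(i,j)}_k = \K(\s || \cc_j ; \bx_{1:n}) - \K(\s || \cc_i ; \bx_{1:n}) =: -c < 0.$$

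Next I would exploit $n$-periodicity to control the partial sums $\sum_{k=1}^N \mu^{(i,j)}_k$ for arbitrary $N$. Because $\bx_{k+n} = \bx_k$ and, via \eqref{eq:muk}, $\mu^{(i,j)}_k$ depends on $k$ only through $\bx_k$, each complete period contributes exactly $-c$; writing $N = mn + r$ with $0 \leq r < n$ yields $\sum_{k=1}^N \mu^{(i,j)}_k = -mc + \rho_N$, where the remainder $\rho_N := \sum_{k=mn+1}^{N} \mu^{(i,j)}_k$ runs over fewer than $n$ terms. The bounds \eqref{eq:bounds} force each $\mu^{(i,j)}_k = \Ex[\ln Z^{(i,j)}_k \mid \s]$ to satisfy $|\mu^{(i,j)}_k| \leq \max\{|\ln \alpha|, |\ln \beta|\}$, so $\rho_N$ is bounded by a constant independent of $N$. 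Taking $p = 1 > \tfrac{1}{2}$, dividing by $N = mn + r$ and letting $N \to \infty$ (equivalently $m \to \infty$) annihilates the bounded remainder and gives $\lim_{N} \frac{1}{N}\sum_{k=1}^N \mu^{(i,j)}_k = -c/n < 0$; in particular the limsup condition \eqref{eq:convergence} holds for the pair $(i,j)$. Theorem \ref{thm: estimator} then yields $\hat{p}_k(i \mid d_{1:k}; \bx_{1:k}) \to 0$ a.s., i.e. $i \in \OO$, which completes the contrapositive.

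I do not expect a genuine obstacle: the argument reduces the claim to the already-established Theorem \ref{thm: estimator} through the KL identity \eqref{eq:KLDsum}. The only point that demands care is the trailing partial period — one must verify that the remainder $\rho_N$ stays uniformly bounded so that it is washed out under the $1/N$ normalisation. This is precisely where the compactness-derived bounds \eqref{eq:boundedl}--\eqref{eq:bounds} are used, guaranteeing that $\mu^{(i,j)}_k$ remains bounded along the entire (bounded, periodic) location sequence.
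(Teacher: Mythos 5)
Your proposal is correct and follows essentially the same route as the paper: reduce to Theorem \ref{thm: estimator} via the identity \eqref{eq:KLDsum}, use $n$-periodicity to split an arbitrary partial sum into complete periods (each contributing the same strictly negative amount) plus a bounded remainder, and conclude with $p=1$ that \eqref{eq:convergence} holds for any $i \notin \B$ paired with $j \in \B$. Your only cosmetic deviation is bounding the trailing remainder via the compactness bounds \eqref{eq:bounds}, where the paper simply notes the remainder sum has fewer than $n$ terms and is washed out by the $1/m$ normalisation (periodicity alone already makes $\mu^{(i,j)}_k$ take finitely many values, so no compactness is needed); this changes nothing of substance.
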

\begin{proof}
	Consider any index $i \notin \B$. By definition, for any $j \in \B$, we have
	$$ \K(\s||\cc_i ; \bx_{1:n} ) > \K(\s||\cc_j ; \bx_{1:n} ),$$ 
	which implies $\sum_{k=1}^n \mu^{(i,j)}_k < 0$ by \eqref{eq:KLDsum}. 
	Since $(\bx_k)_{k \in \N}$ is $n$-periodic, $$ \forall m \in \N, \ \sum_{k=1}^{m} \mu^{(i,j)}_k = \left\lfloor \frac{m}{n} \right\rfloor \sum_{k=1}^n\mu^{(i,j)}_k +   \sum_{k=1}^{m \text{ mod } n} \mu^{(i,j)}_k.$$ Note that $\lim_{m \to \infty} m^{-1} \lfloor \frac{m}{n} \rfloor = \frac{1}{n}$ and $(m \text{ mod }n) < n$ for any $m$. Therefore, \begin{align*} \lim_{m \to \infty} & \frac{1}{m} \sum_{k=1}^{m} \mu^{(i,j)}_k  \\
	& = \lim_{m \to \infty}\left( \frac{\lfloor \frac{m}{n} \rfloor }{m}\sum_{k=1}^n \mu^{(i,j)}_k + \frac{1}{m} \sum_{k=1}^{m \text{ mod } n} \mu^{(i,j)}_k  \right) \\
	& = \frac{1}{n} \sum_{k=1}^{n} \mu^{(i,j)}_k < 0. \end{align*} Thus \eqref{eq:convergence} is satisfied using $p = 1$, and $ {\hat{p}_k(i \mid d_{1:k} ; \bx_{1:k})} \to 0$ a.s. by Theorem \ref{thm: estimator}, which by definition implies $i \in \OO$. We have shown $\B^c \subset \OO$, which is equivalent to $\OO^c \subset \B$. 
\end{proof}
\begin{remark} \label{rem:AnB}
Equation \eqref{eq:setB} reveals that the posterior may fail to decay to zero only at indices which minimise ${\K(\s||\cc_i ; \bx_{1:n})}$. The centres corresponding to these indices are solutions to $ {\argmin_{\x \in \mathcal{C}} \K(\s||\x ; \bx_{1:n} )}$. They can be considered approximate solutions to 
\begin{equation} \argmin_{\x \in S} \K(\s||\x ; \bx_{1:n} ),
\label{eq:contopt} \end{equation}
where the optimisation now takes place over the entire search region instead of the finite set $\C$.
\end{remark}
The problem \eqref{eq:contopt} is more amenable to analysis, as it does not depend on the choice of centres in $\C$, and its solution set provides us with additional insight.
\begin{restatable}{proposition}{sinJmax}
\label{prop:sinJmax}
Given $\bx_1,...,\bx_n \in \R^q$ and $\s \in S$, $${\min_{\x \in S} \K(\s||\x ; \bx_{1:n} ) = 0}, \text{ and }$$ \begin{equation} \argmin_{\x \in S} \K(\s||\x ; \bx_{1:n} ) = \bigcap_{k=1}^n \{ \x \in S \mid \ell(\x, \bx_k) = \ell(\s, \bx_k) \} := \A. \label{eq:seteq} \end{equation} Therefore, 
$\s \in \displaystyle \argmin_{\x \in S} \K(\s||\x ;\bx_{1:n}) $. 
\end{restatable}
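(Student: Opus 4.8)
The plan is to exploit two standard facts about Kullback--Leibler divergence: that it is nonnegative, and that it vanishes precisely when the two distributions coincide. I would begin from the additivity identity \eqref{eq:KLadditivity}, which writes $\K(\s||\x ; \bx_{1:n})$ as the finite sum $\sum_{k=1}^n \K(\s||\x ; \bx_k)$ of per-measurement divergences. By Gibbs' inequality (a consequence of \cite[Theorem 2.6.3]{cover2005}), each summand satisfies $\K(\s||\x ; \bx_k) \geq 0$, so the whole sum is nonnegative and hence bounded below by $0$ on all of $S$.

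Next I would identify exactly when each summand attains this lower bound. Since KL divergence between two distributions is zero if and only if they are identical, and since $g(\cdot \mid \s ; \bx_k)$ and $g(\cdot \mid \x ; \bx_k)$ are Bernoulli laws with parameters $\ell(\s,\bx_k)$ and $\ell(\x,\bx_k)$, it follows that $\K(\s||\x ; \bx_k) = 0$ if and only if $\ell(\x,\bx_k) = \ell(\s,\bx_k)$. A sum of nonnegative terms equals zero exactly when every term does, so $\K(\s||\x ; \bx_{1:n}) = 0$ holds precisely when $\ell(\x,\bx_k) = \ell(\s,\bx_k)$ for all $k \in \N_n$ --- which is the defining condition of membership in $\A$.

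Finally I would verify that the lower bound $0$ is actually attained within $S$, which simultaneously fixes the minimum value, the argmin set, and the membership claim. Choosing $\x = \s$ gives $\ell(\s,\bx_k) = \ell(\s,\bx_k)$ trivially for every $k$, so $\s \in \A$ and $\K(\s||\s ; \bx_{1:n}) = 0$; since $\s \in S$ by hypothesis, the infimum $0$ is achieved, giving $\min_{\x \in S} \K(\s||\x ; \bx_{1:n}) = 0$. Combined with the characterisation above, the set of minimisers is exactly $\A$, and $\s$ lies in it. I do not expect a substantial obstacle, as the result is essentially a direct application of the defining properties of KL divergence; the only point needing care is invoking the equality case of Gibbs' inequality correctly, for which the strict positivity of $\ell$ --- guaranteeing mutual absolute continuity of the two Bernoulli laws over $\{0,1\}$ --- is precisely the relevant hypothesis.
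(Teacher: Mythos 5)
Your proposal is correct and follows essentially the same route as the paper's proof: both rest on nonnegativity of KL divergence, the additivity identity \eqref{eq:KLadditivity} to reduce to per-measurement divergences, the equality case (each term vanishes iff the two Bernoulli parameters $\ell(\x,\bx_k)$ and $\ell(\s,\bx_k)$ coincide, which is the paper's ``choose $d=1$'' step), and the observation that $\x = \s$ attains the bound. Your explicit remark that strict positivity of $\ell$ underwrites the equality case of Gibbs' inequality is a nice touch the paper leaves implicit, but it does not change the argument.
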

\begin{proof}
	Here we exploit the properties of KL divergence stated in \cite[Theorem 2.6.3]{cover2005}. 
	Since KL divergence is non-negative, we have ${\K(\s||\x ; \bx_{1:n}) \geq 0}$ for all $\x$. From \eqref{eq:p},
	$$ p(\d \mid \s;\bx_{1:n}) = \prod_{k=1}^n \ell(\s,\bx_k)^{d_k} \left[ 1 - \ell(\s,\bx_k) \right]^{1 - d_k}.$$
	If $\x \in \A$, then $\ell(\x,\bx_k) = \ell(\s,\bx_k)$ for all $k \in \N_n$, which implies ${p(\d \mid \s;\bx_{1:n})} = {p(\d \mid \x;\bx_{1:n})}$ for all $\d \in \{0,1\}^n$. Recalling \eqref{eq:KLdef}, this implies ${\K(\s || \x; \bx_{1:n}) = 0}$. 
	
	Recalling \eqref{eq:KLadditivity}, if ${\K(\s||\x;\bx_{1:n}) = 0}$, then ${\K(\s||\x;\bx_k) = 0}$ for all $k \in \N_n$. According to the definition \eqref{eq:KLD}, this holds if and only if \begin{equation*} \forall k \in \N_n,\ \forall d \in \{0,1\},\ g(d \mid \s,\bx_k) =g(d\mid \x ; \bx_k). \label{eq:gequal} \end{equation*} Finally referring to \eqref{eq:gfun}, choosing $d = 1$ implies $\ell(\s,\bx_k) = \ell(\x,\bx_k)$ for all $k \in \N_n$, and therefore $\x \in \A$. 
\end{proof}
 Thus \eqref{eq:contopt} contains only the candidate locations that are indistinguishable from the source based on the entire history of measurements (see Remark \ref{rem:indistinguishable}). Given this characterisation, it is obviously desirable to define the $n$-periodic sequence $\bx_{k \in \N}$ by choosing $\bx_1,...,\bx_n$ such that $\A = \{\s\}$. This is a useful requirement to impose when planning agent trajectories, as it guarantees there is sufficient information available from the measurements to uniquely identify the source. 
\begin{remark} \label{rem:mild}
	Observe that $\A \subset \R^q$ is the solution set of $n$ simultaneous non-linear equations.  Therefore if $n > q$, typically only mild conditions on the measurement location geometry are required to guarantee $\A = \{\s\}$. Such conditions are developed in Section \ref{sec:rho} for the case in which the probability of detection is purely a function of distance from the source. 
\end{remark}
 
If $\A = \{\s\}$ and $\s \in \C$, the posterior is consistent and estimation algorithm will eventually unambiguously identify the source index.
\begin{corollary} \label{cor:unambiguous}
Let $(\bx_k)_{k\in \N} \subset \R^q $ be $n$-periodic, and let $\C = \{\cc_1,...,\cc_M\} \subset S$. Suppose $\cc_j = \s$ for some $j \in \N_M$. If  $\A = \{\s\}$, then ${\hat{p}_k(j \mid d_{1:k}; \bx_{1:k})} \to 1$ a.s.. 
\end{corollary}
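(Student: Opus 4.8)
The plan is to deduce this directly from Theorem \ref{th:consistency} (Posterior Consistency) by verifying that its hypothesis \eqref{eq:no2close} holds for every competing centre. The only real work is to translate the geometric condition $\A = \{\s\}$ into the required growth condition on the sums of squared detection-probability differences; everything else is an appeal to the already-established consistency result.

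First I would observe that, since $\C$ consists of distinct points and $\cc_j = \s$, every index $i \neq j$ satisfies $\cc_i \neq \s$, and hence $\cc_i \notin \A$ under the hypothesis $\A = \{\s\}$. Recalling the characterisation of $\A$ in \eqref{eq:seteq}, $\cc_i \notin \A$ means there is at least one index $k_0 \in \N_n$ within a single period at which $\ell(\cc_i, \bx_{k_0}) \neq \ell(\s, \bx_{k_0})$. Consequently the per-period sum
$$ A_i := \sum_{k=1}^n \big(\ell(\s,\bx_k) - \ell(\cc_i,\bx_k)\big)^2 $$
is strictly positive, since it contains the strictly positive term indexed by $k_0$.

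Next I would exploit the $n$-periodicity exactly as in the proof of Theorem \ref{th:general}. Writing the partial sum up to an arbitrary bound $m$ as $\lfloor m/n \rfloor A_i + \sum_{k=1}^{m \bmod n}(\ell(\s,\bx_k)-\ell(\cc_i,\bx_k))^2$, and using $\lim_{m\to\infty} m^{-1}\lfloor m/n\rfloor = 1/n$ together with the boundedness of the residual sum over the partial period, I obtain
$$ \lim_{m \to \infty} \frac{1}{m}\sum_{k=1}^m \big(\ell(\s,\bx_k) - \ell(\cc_i,\bx_k)\big)^2 = \frac{A_i}{n} > 0. $$
Thus condition \eqref{eq:no2close} is satisfied with the choice $p = 1 > \tfrac{1}{2}$ for every $i \neq j$. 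Applying Theorem \ref{th:consistency} then yields $\hat{p}_k(j \mid d_{1:k}; \bx_{1:k}) \to 1$ almost surely, completing the argument.

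I do not anticipate a genuine obstacle here: the result is a corollary in the literal sense, reducing to a hypothesis check for an already-proved theorem. The only point requiring care is the notational overlap between the period (called $n$ in the corollary) and the summation bound in \eqref{eq:no2close} (also written $n$ there, but playing the role of the $m \to \infty$ limit above); keeping these two roles distinct is what makes the periodicity computation transparent and is the step most likely to trip up a careless write-up.
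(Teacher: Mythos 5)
Your proof is correct, but it takes a genuinely different route from the paper's. The paper deduces the corollary from Theorem \ref{th:general} together with Proposition \ref{prop:sinJmax}: since $\A = \argmin_{\x \in S} \K(\s||\x;\bx_{1:n}) = \{\s\} = \{\cc_j\}$ and $\cc_j \in \C \subset S$, the discrete minimiser set $\B$ of \eqref{eq:setB} collapses to $\{j\}$, so $i \in \OO$ for every $i \neq j$ by Theorem \ref{th:general}, and normalisation forces $\hat{p}_k(j \mid d_{1:k};\bx_{1:k}) \to 1$ a.s.; this is a three-line, purely qualitative argument needing only $\K(\s||\cc_i;\bx_{1:n}) > 0$ off the minimiser. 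You instead verify the quantitative hypothesis \eqref{eq:no2close} of Theorem \ref{th:consistency}: from $\cc_i \notin \A$ and the characterisation \eqref{eq:seteq} you extract an index $k_0$ within the period at which $\ell(\cc_i,\bx_{k_0}) \neq \ell(\s,\bx_{k_0})$, conclude the per-period sum $A_i$ is strictly positive, and then run the same periodicity decomposition used in the proof of Theorem \ref{th:general} to get $\frac{1}{m}\sum_{k=1}^m \big(\ell(\s,\bx_k)-\ell(\cc_i,\bx_k)\big)^2 \to A_i/n > 0$, i.e. \eqref{eq:no2close} with $p=1$. Both arguments ultimately rest on Theorem \ref{thm: estimator}; yours routes through Pinsker's inequality (embedded in Theorem \ref{th:consistency}) and, as a by-product, yields the explicit per-period information quantity $A_i/n$ witnessing the divergence rate, at the cost of a slightly longer computation. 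One hypothesis check you left implicit and should state in a final write-up: Theorem \ref{th:consistency} assumes $(\bx_k)_{k \in \N}$ is bounded, which holds here trivially because an $n$-periodic sequence takes only the finitely many values $\bx_1,\ldots,\bx_n$. Your closing caution about the two roles of $n$ is well taken, and introducing $m$ for the growing summation bound resolves it cleanly.
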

\begin{proof}
	Suppose $\A = \{\s\} = \{\cc_j\} = \argmin_{\x \in S}\K(\s||\x ; \bx_{1:n})$. Since $\cc_j \in \C \subset S$, we have $\B = \{\cc_j\}$. The centres are distinct and thus for all $i \neq j$, $\cc_i \notin \B$, which implies $i \in \OO$ by Theorem \ref{th:general}. Since $\hat{p}_k$ is a probability distribution, this implies ${\hat{p}_k(j\mid d_{1:k}; \bx_{1:k})} \to 1$ a.s..
\end{proof}
If $\A = \{\s\}$ but $\s$ does not coincide with the centres, the support of the posterior (in the limit) will contain approximations to $\s$ in $\C$ that yield the lowest KL divergence from $G(\cdot \mid \s,\bx_{1:n})$. 
\begin{remark}
Note that cells having the lowest $\K$ value do not necessarily coincide with the centres closest to the source: 
$$ \| \s - \cc_j \| \leq \| \s - \cc_i\| \notimplies \K(\s||\cc_j ; \bx_{1:n}) \leq \K(\s||\cc_i ; \bx_{1:n}).$$ \end{remark} According to the remark above, if the cells are based on the Voronoi decomposition of $S$, there is no guarantee of consistency (with respect to Definition \ref{def:consistency}) when $\s \notin \C$. However, simulation results in Section \ref{sec:sim} indicate that the estimation procedure still performs well when the centres are chosen in a grid.

\subsection{Range dependent probability of detection} \label{sec:rho}
In many scenarios, the probability of detection will be purely a function of the distance from the source to an agent. We can then say $\ell$ is of the form $ \ell(\x, \bx) = \rho(\|\x - \bx \|)$, where \hbox{$\rho: [0, \infty) \to (0,1)$}. Typically $\rho$ will be strictly decreasing, and therefore injective. We now consider the application of the previous results to this case.

We begin by examining the consistency requirement \eqref{eq:no2close} for a general measurement location sequence in Theorem \ref{th:consistency}, when the source coincides with some centre.
\begin{lemma} \label{lem:continverse}
	Let $\s \in S$, let $(\bx_k)_{k \in \N} \subset \R^q$ be a bounded sequence, and assume $\ell(\x, \bx) = \rho(\|\x - \bx \|)$, where \hbox{$\rho: [0, \infty) \to (0,1)$} is continuous and injective. Then
	\begin{multline*}
	\forall \delta>0,\ \exists \epsilon \in (0,1),\\ \big| \| \bx_k - \s \| - \| \bx_k - \cc_i \| \big| \geq \delta \implies | \ell(\s,\bx_k) - \ell(\cc_i,\bx_k) | \geq \epsilon.
	\end{multline*}
\end{lemma}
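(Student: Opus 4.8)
The plan is to recognise this statement as the contrapositive of the \emph{uniform continuity of the inverse} of $\rho$, restricted to a suitable compact interval. The first step is to identify the relevant compact interval on which the radial arguments live, uniformly over $k$ and $i$. Since $\ell(\x,\bx)=\rho(\|\x-\bx\|)$, writing $r_k:=\|\bx_k-\s\|$ and $r'_{k,i}:=\|\bx_k-\cc_i\|$ gives $\ell(\s,\bx_k)=\rho(r_k)$ and $\ell(\cc_i,\bx_k)=\rho(r'_{k,i})$. Because $S$ is compact and $(\bx_k)_{k\in\N}$ is bounded, there is a constant $R<\infty$ (depending only on $S$ and the bound on the sequence, not on $k$ or $i$) such that all the points $\s,\cc_1,\dots,\cc_M$ and all the $\bx_k$ lie in a fixed ball; hence $r_k,r'_{k,i}\in[0,R]$ for every $k$ and every $i$. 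This uniformity is exactly what will make the eventual $\epsilon$ depend only on $\delta$.

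Next I would invoke the standard real-analysis fact that a continuous injective function on an interval is strictly monotone, so $\rho$ restricted to $[0,R]$ is a strictly monotone continuous map, hence a homeomorphism onto its image $K:=\rho([0,R])$, which is itself a compact interval inside $(0,1)$. Consequently the inverse $\rho^{-1}:K\to[0,R]$ exists and is continuous, and by the Heine--Cantor theorem (a continuous function on a compact set is uniformly continuous, cf. \cite[Theorem 4.16]{rudin1964-1} for the compactness machinery) $\rho^{-1}$ is uniformly continuous on $K$. The final step is purely formal: given $\delta>0$, uniform continuity of $\rho^{-1}$ supplies some $\epsilon_0>0$ such that for all $y,y'\in K$, $|y-y'|<\epsilon_0$ implies $|\rho^{-1}(y)-\rho^{-1}(y')|<\delta$. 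Taking the contrapositive of the inner implication and substituting $y=\rho(r_k)$, $y'=\rho(r'_{k,i})$ (so that $\rho^{-1}(y)=r_k$ and $\rho^{-1}(y')=r'_{k,i}$) yields precisely $\big|\,r_k-r'_{k,i}\,\big|\geq\delta\implies|\rho(r_k)-\rho(r'_{k,i})|\geq\epsilon_0$. Finally, since shrinking $\epsilon$ only weakens the conclusion, I replace $\epsilon_0$ by $\epsilon:=\min\{\epsilon_0,\tfrac12\}\in(0,1)$ to meet the stated range.

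I expect the only genuinely delicate point to be the \emph{uniformity} in the first paragraph: the argument must produce a single compact interval $[0,R]$ and hence a single modulus $\epsilon$ that works simultaneously for all measurement indices $k$ and all centre indices $i$, which is where the boundedness of $(\bx_k)_{k\in\N}$ and the compactness of $S$ are essential. The remaining ingredients---monotonicity of continuous injections on intervals, continuity of the inverse, and Heine--Cantor uniform continuity---are standard and require no further work beyond careful bookkeeping of the contrapositive.
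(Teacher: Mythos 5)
Your proposal is correct and follows essentially the same route as the paper: restrict the radii to a single compact interval $[0,z]$ using boundedness of $(\bx_k)_{k\in\N}$, invert $\rho$ there (the paper cites \cite[Theorem 4.17]{rudin1964-1} for continuity of the inverse of a continuous bijection on a compact domain, whereas you go via strict monotonicity), and take the contrapositive of the (uniform) continuity of $\rho^{-1}$. If anything, your explicit appeal to Heine--Cantor makes the uniformity in $k$ and $i$ cleaner than the paper's statement of continuity, whose quantifier ordering leaves the uniform modulus implicit.
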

\begin{proof}
	Let $z := \sup \{ \| \bx_k - \cc \| \mid k \in \N,\ \cc \in \C \cup \{\s\} \} < \infty,$ and let $D := \rho([0,z])$. Let $\bar{\rho}:[0,z] \to D,\ \bar{\rho}(r) = \rho(r)$, which is a bijection because $\rho$ is injective. By \cite[Theorem 4.17]{rudin1964-1}, $\bar{\rho}^{-1}$ is continuous and therefore,
	\begin{dmath} {\forall l_1,l_2 \in D,\ \forall \delta> 0,\ \exists \epsilon> 0,}\\ { |l_1 - l_2 | < \epsilon \implies |\bar{\rho}^{-1}(l_1) - \bar{\rho}^{-1}(l_2)| < \delta.}\label{eq:rudin} \end{dmath}
	Now for any $r_i \in [0,z]$, $l_i = \bar{\rho}(r_i) \in D$. Therefore \eqref{eq:rudin} implies
	\begin{dmath*} {\forall r_1,r_2 \in [0,z],\ \forall \delta> 0,\ \exists \epsilon> 0,}\\ {|\bar{\rho}(r_1) - \bar{\rho}(r_2) | < \epsilon \implies |r_1 - r_2| < \delta,} \end{dmath*}
	which is in turn equivalent to
	\begin{dmath}{\forall r_1,r_2 \in [0,z],\ \forall \delta> 0,\ \exists \epsilon> 0,}\\ {|r_1 - r_2| \geq \delta \implies  |\bar{\rho}(r_1) - \bar{\rho}(r_2) | \geq \epsilon.} \label{eq:almost} \end{dmath}
	The image of $\bar{\rho}
	$ is contained in $(0,1)$, which implies $\epsilon \in (0,1)$. 
	Now $\| \bx_k - \s\|,\|\bx_k - \cc_i\| \in [0,z]$ by definition of $z$, and therefore $\bar{\rho}(\| \bx_k - \s\|) = \ell(\s,\bx_k) $ and $\bar{\rho}(\| \bx_k - \cc_i\|) = \ell(\cc_i,\bx_k).$ Thus choosing $r_1 = \| \bx_k - \s\|$ and $r_2 = \|\bx_k - \cc_i\|$ in \eqref{eq:almost} completes the proof. 
\end{proof}
\begin{remark} \label{rem:straightline}
	Referring to the Lemma above, condition \eqref{eq:no2close} is met for a particular cell $i$ if $\big| \| \bx_k - \s \| - \| \bx_k - \cc_i \| \big| \geq \delta$ occurs sufficiently often for the same $\delta > 0$. A simple way to guarantee this holds for every $\cc_i \neq \s$ is
	to make sure the location sequence does not travel in (or converge to) a straight line indefinitely.
\end{remark}

We now turn our attention to periodic locations sequences, allowing $\s \in S$ to be arbitrary. 
As discussed in Section \ref{sec:nperiodic}, the basic requirement for an $n$-periodic location sequence is to ensure $\A = \{\s\}$.
If enough readings are taken at location $ \bx$, then the probability of detection $\ell(\s, \bx)$ can be estimated from the ratio of hits to misses. When $\rho$ is injective, this probability of detection can be mapped back to place $\bx$ at a unique distance from $\s$. This suggests a strategy akin to trilateration will ensure $\s$ is the unique solution to \eqref{eq:contopt}, and the result below confirms our intuition.
Let $\mathrm{aff}(X)$ denote the affine hull of some $X \subset \R^q$.  
\begin{proposition}
	\label{prop:s=Jmon}
	Let $\s \in S \subset \R^q$, where $q \in \{2,3\}$. Let $\bx_1,...,\bx_n \in S$ and let $\ell$ be of the form $ \ell(\x, \bx) = \rho(\|\x - \bx \|)$, where \hbox{$\rho: [0, \infty) \to (0,1)$} is continuous and injective.
	If $\dim \mathrm{aff}(\{\bx_1,...,\bx_n\}) = q$, then $$\A := \bigcap_{k=1}^n \{ \x \in S \mid \ell(\x, \bx_k) = \ell(\s, \bx_k) \} = \{\s\}.$$
\end{proposition}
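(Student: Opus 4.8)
The plan is to use the injectivity of $\rho$ to convert the defining condition of $\A$ into a system of equidistance equations, and then solve that system by elementary linear algebra, where the full-dimensionality of the affine hull does all the work.

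First I would dispose of the easy inclusion: $\s \in \A$, since $\ell(\s,\bx_k) = \ell(\s,\bx_k)$ holds trivially for every $k$ (this also follows from Proposition~\ref{prop:sinJmax}), and $\s \in S$ by hypothesis. Thus $\{\s\} \subseteq \A$, and it remains to prove $\A \subseteq \{\s\}$. To this end, fix an arbitrary $\x \in \A$. For each $k \in \N_n$, membership gives $\rho(\|\x - \bx_k\|) = \rho(\|\s - \bx_k\|)$; because $\rho$ is injective, this is equivalent to $\|\x - \bx_k\| = \|\s - \bx_k\|$. Geometrically, $\A$ is therefore the intersection of the $n$ spheres centred at the points $\bx_k$, each passing through $\s$.

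Next I would linearise. Squaring each equality and expanding $\|\x - \bx_k\|^2 = \|\s - \bx_k\|^2$, the quadratic terms $\|\bx_k\|^2$ cancel, leaving $\|\x\|^2 - \|\s\|^2 = 2(\x - \s) \cdot \bx_k$ for every $k \in \N_n$. The left-hand side is independent of $k$, so subtracting the $k=1$ equation from the general one eliminates it and produces the homogeneous conditions $(\x - \s) \cdot (\bx_k - \bx_1) = 0$ for all $k \in \N_n$. The decisive step is then to invoke $\dim \mathrm{aff}(\{\bx_1,\dots,\bx_n\}) = q$: since $\mathrm{aff}(\{\bx_1,\dots,\bx_n\}) = \bx_1 + \mathrm{span}\{\bx_k - \bx_1 : k \in \N_n\}$, full dimension $q$ is precisely the statement that the difference vectors $\bx_k - \bx_1$ span $\R^q$. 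As $\x - \s$ is orthogonal to each of them, it is orthogonal to all of $\R^q$, forcing $\x - \s = \mathbf{0}$, i.e. $\x = \s$. Hence $\A \subseteq \{\s\}$, which together with the reverse inclusion gives $\A = \{\s\}$.

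I would not expect a serious obstacle here; the argument is short once the reduction to spheres is made. The only points requiring care are recognising that it is \emph{injectivity} of $\rho$ — not continuity — that licenses passing from equal detection probabilities to equal distances, and then correctly translating ``affine hull of dimension $q$'' into ``the difference vectors span $\R^q$''. This last translation is the one place where the dimensional hypothesis is genuinely used, and it is exactly the trilateration intuition flagged before the statement.
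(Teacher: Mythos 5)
Your proof is correct and follows the same route as the paper: injectivity of $\rho$ reduces $\A$ to the intersection of $n$ spheres centred at the $\bx_k$, after which the full-dimensionality of the affine hull forces the intersection to be $\{\s\}$. The only difference is one of completeness, since where the paper merely asserts that three non-collinear ($q=2$) or four non-coplanar ($q=3$) measurement points suffice for uniqueness, you carry out the standard trilateration linearisation $(\x-\s)\cdot(\bx_k-\bx_1)=0$ and the spanning argument explicitly (and, incidentally, for arbitrary $q$), which is a more rigorous rendering of the same idea.
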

\begin{proof}
	Since $\rho$ is injective,
	\begin{align*} \A & = \bigcap_{k=1}^n \{ \x \in S \mid \rho(\| \x- \bx_k \|) = \rho( \| \s -  \bx_k \| ) \} \\
	& = \bigcap_{k=1}^n \{ \x \in S \mid \| \x- \bx_k \| =  \| \s -  \bx_k \| \}, \end{align*}
	which is the intersection of $n$ spheres in $S \subset \R^q$. For their intersection to be unique, it is sufficient for three of the $\bx_k$ to not be collinear when $q=2$, and four of the $\bx_k$ to not be coplanar when $q = 3$.
\end{proof}
\section{D-Optimal measurement locations} \label{sec:fish} 
As noted in Remark \ref{rem:mild}, the requirement that $\A = \{\s\}$ typically imposes only mild constraints on the geometry of an $n$-periodic measurement location sequence if $n > q$. While this guarantees there is sufficient information available to uniquely identify the source, it makes no claim to optimality. In this section, we first attempt to optimise the measurement locations with respect to the determinant of the Bayesian Information Matrix (BIM) \cite[Section 4.3.3.2]{vantrees2013}. The inverse of the BIM is the Bayesian Cramer-Rao bound, a lower bound on the MSE of any estimator\footnote{subject to the bias conditions in \cite[Equation (4.522)]{vantrees2013}.}. Thus, maximising the BIM determinant minimises a lower bound on the volume of the estimator's concentration ellipsoids \cite[Section 4.3.2.1]{vantrees2013}. 

Recall we have a team of $N$ agents, and suppose they each report one measurement in a fixed sequence every $N$ time-steps.
Consider the following question: given the information $(\bx_k,d_k)_{k \in \N_n}$ received up to time $n$, what are the D-optimal agent locations $\bX := (\bx^+_1,...,\bx^+_N) \in \R^{Nq}$ at which to take the next $N$ measurements $\d^+ := (d^+_1,...,d^+_N) \in \{0,1\}^N$?

The joint probability distribution for $\d^+$ and $\s$ conditioned on the information received is given by
\begin{dmath*} {p(\d^+, \s \mid d_{1:n} ; \bx_{1:n}, \bX)} = {G(\d^+ \mid \s ; \bX)p_n(\s \mid d_{1:N}; \bx_{1:N}),} \end{dmath*}
where $G$ is defined in \eqref{eq:p}.
The Bayesian information matrix for estimating $\s$ from $\d^+$, given the information received up to time $n$, is then
$$ \J_n(\bX) : = - \Ex \left[ \nabla^2_\s \ln p(\d^+, \s \mid d_{1:n} ; \bx_{1:n}, \bX) \mid d_{1:n}  \right].$$ Let $\mathcal{X}_n \subset \R^{Nq}$ denote the feasible set of agent locations at time $n$ (which may, for example, incorporate motion constraints). Solving
\begin{equation} \bx_{n+1:n+N} = \argmax_{\bX \in \mathcal{X}_n} \det \J_n(\bX ) \label{eq:fullproblem} \end{equation} yields D-optimal locations for the next $N$ measurements.
Although \cite{zheng2012,zuo2011a} provide a recursive method for computing the BIM numerically, the problem \eqref{eq:fullproblem} is non-convex and obtaining a direct solution to it is intractable. Instead, we propose a relaxed version of the problem that optimises the classical Fisher Information Matrix (FIM). Analytical solutions to the relaxed problem can be derived under additional assumptions.  
The BIM can be written as 
\begin{dmath} {\J_n(\bX)} = {\Ex\left[ \bI(\s ; \bX) \mid d_{1:n} \right]} + {\J^{\ominus}_n(d_{1:n},\bx_{1:n}),} \label{eq:BIMdecomposition} \end{dmath}
where
$$ \J^{\ominus}_n(d_{1:n},\bx_{1:n}) := -\Ex \left[ \nabla^2_\s \ln p_n(\s \mid d_{1:n}; \bx_{1:n}) \mid d_{1:n} \right]$$
is the contribution of the information information already received, and
\begin{align*} 
\bI(\s ; \bX) & = - \Ex \left[ \nabla_\s^2 \  \ln G( \d^+ \mid \s; \bX) \mid \s \right]  \\
& = - \Ex \left[ \nabla_\s^2 \sum_{k=1}^{N} \ln g(d^+_k \mid \s;\bx^+_k) \mid \s \right] \\ 
& = - \sum_{k=1}^{N} \Ex \left[ \nabla_\s^2  \ln g(d^+_k \mid \s;\bx^+_k) \mid \s \right].
\end{align*}
is the classical FIM conditioned on the source location \cite[Equation (4.515)]{vantrees2013}.
Note that both $\J^{\ominus}_n(d_{1:n},\bx_{1:n}) $ and $ \bI(\s ; \bX)$ are symmetric positive semi-definite \cite[(4.519), (4.393)]{vantrees2013}. It then follows from the Minkowski determinant inequality that
\begin{align} \det \J_n(\bX ) & \geq \det \Ex\left[ \bI(\s ; \bX) \mid d_{1:n}\right] + \det \J^{\ominus}_n(d_{1:n},\bx_{1:n}) \nonumber \\
 & \approx \det  \bI(\hat{\s}_n ; \bX) + \det \J^{\ominus}_n(d_{1:n},\bx_{1:n}), \label{eq:approxdecomposition}
\end{align}
where
$$ \hat{\s}_n := \Ex[ \s \mid d_{1:n}]  $$ 
represents the mean of the posterior $p_n$. Since the second term of \eqref{eq:approxdecomposition} does not depend on $\bX$, 
\begin{equation} \bx_{n+1:n+N} = \argmax_{\bX \in \mathcal{X}_n} \det \bI(\hat{\s}_n;\bX),
\label{eq:optimalsensor} \end{equation}
is a relaxation of \eqref{eq:fullproblem} that chooses measurement locations to maximise the determinant of the FIM evaluated at the expected source location. The FIM has the structure
\begin{equation} \bI(\s;\bX) = \sum_{k=1}^{N} \I(\s;\bx^+_k), \label{eq:totalfish} \end{equation}
where 
\begin{align}
\I(\s;\bx_k) &:= - \Ex \left[ \nabla_\s^2 \ln g(d_k\mid \s;\bx_k) \right] \\
& = \dfrac{\nabla_\s \ell(\s,\bx_k)\nabla_\s \ell(\s,\bx_k)^\top}{\ell(\s,\bx_k)[1 - \ell(\s,\bx_k)]}. \label{eq:fishi}
\end{align} is the FIM for a single reading taken at $\bx_k$. Equation \eqref{eq:fishi} is derived in Appendix \ref{app:Fish}. We exploit this structure below to obtain an analytic solution to \eqref{eq:optimalsensor} for a range-dependent probability of detection, under particular distance constraints. 
\subsection{Range dependent probability of detection}
An expression for the FIM as a function of the source and agent locations is derived below, assuming the probability of detection is a smooth function of distance. We focus on localisation in the plane, letting $\s = (s_1,s_2)$ and $\bx_k = (\xi_{k,1}, \xi_{k,2})$.
\begin{proposition}[Fisher Information Matrix] \label{prop:fish} Let $ \s \in S \subset \R^2$, and assume $\ell(\s,\bx) = \rho(\|\s - \bx\|)$, where $\rho:[0,\infty) \to (0,1)$ is continuously differentiable. 
	Let $\bx_{1:N} \in \R^{2N}$ be such that, $\forall k \in \N_N,\ \bx_k \neq \s$. Define $$r_k := \|\s - \bx_k\|, \quad \theta_k := \mathrm{atan2} \big( \xi_{k,2} - s_2, \xi_{k,1} - s_1 \big).$$ Then
	\begin{equation}
	\bI(\s;\bx_{1:N}) = \sum_{k=1}^N \dfrac{ \rho'(r_k)^2}{\rho(r_k)[1-\rho(r_k)]} \begin{bmatrix} \cos^2(\theta_k) & \frac{\sin(2\theta_k)}{2} \\ \frac{\sin(2\theta_k)}{2} & \sin^2(\theta_k) \end{bmatrix}. \label{eq:distancefish}
	\end{equation}
\end{proposition}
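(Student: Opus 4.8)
The plan is to start from the additive structure \eqref{eq:totalfish}, which reduces the problem to computing $\I(\s;\bx_k)$ for a single measurement, and then to evaluate the single-measurement FIM \eqref{eq:fishi} explicitly under the range-dependent assumption $\ell(\s,\bx) = \rho(\|\s - \bx\|)$. Since \eqref{eq:fishi} already expresses $\I(\s;\bx_k)$ as an outer product $\nabla_\s \ell \, \nabla_\s \ell^\top$ divided by the scalar $\ell(1-\ell)$, the entire task amounts to computing the gradient $\nabla_\s \ell(\s,\bx_k)$ in closed form and then assembling the rank-one matrix.

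First I would compute $\nabla_\s \ell(\s,\bx_k)$ via the chain rule. Writing $r_k = \|\s - \bx_k\|$, we have $\ell(\s,\bx_k) = \rho(r_k)$, so $\nabla_\s \ell = \rho'(r_k)\,\nabla_\s r_k$. The gradient of the Euclidean norm is the unit vector $\nabla_\s r_k = (\s - \bx_k)/\|\s - \bx_k\|$, which is well-defined precisely because the hypothesis guarantees $\bx_k \neq \s$ for every $k$. Next I would rewrite this unit vector in terms of the bearing angle $\theta_k = \mathrm{atan2}(\xi_{k,2} - s_2,\ \xi_{k,1} - s_1)$. Note that $\theta_k$ is defined as the angle of the vector $\bx_k - \s$, so the outward unit vector from $\s$ toward $\bx_k$ is $(\cos\theta_k, \sin\theta_k)$, and hence the unit vector $(\s - \bx_k)/r_k = -(\cos\theta_k, \sin\theta_k)$. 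The sign is immaterial for the outer product, so $\nabla_\s \ell \, \nabla_\s \ell^\top = \rho'(r_k)^2 (\cos\theta_k,\sin\theta_k)^\top(\cos\theta_k,\sin\theta_k)$.

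I would then form the outer product $(\cos\theta_k,\sin\theta_k)^\top(\cos\theta_k,\sin\theta_k)$, whose diagonal entries are $\cos^2\theta_k$ and $\sin^2\theta_k$ and whose off-diagonal entries are $\cos\theta_k \sin\theta_k = \tfrac{1}{2}\sin(2\theta_k)$ by the double-angle identity. Dividing by $\ell(1-\ell) = \rho(r_k)[1-\rho(r_k)]$ and summing over $k$ using \eqref{eq:totalfish} yields exactly \eqref{eq:distancefish}. The main obstacle here is essentially bookkeeping rather than conceptual difficulty: one must carefully verify the bearing-angle convention so that the trigonometric entries come out with the correct arguments, and confirm that the squaring of the gradient cancels the sign ambiguity in the unit vector. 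The differentiability of $\rho$ is used to ensure $\rho'(r_k)$ exists, and the assumption $\bx_k \neq \s$ is essential so that $r_k > 0$ and both the unit vector and the denominator are well-defined.
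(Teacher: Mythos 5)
Your proposal is correct and follows essentially the same route as the paper's proof: the chain rule giving $\nabla_\s \ell = \rho'(r_k)\nabla_\s r_k$, the identification $\nabla_\s r_k = -(\cos\theta_k,\sin\theta_k)^\top$ via the bearing-angle convention (with the sign cancelling in the outer product), substitution into \eqref{eq:fishi}, and summation via \eqref{eq:totalfish}. Your added remarks on where the hypotheses $\bx_k \neq \s$ and continuous differentiability of $\rho$ are used are accurate and consistent with the paper.
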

\begin{proof}	Since $\ell(\s,\bx_k) = \rho(r_k)$, it follows from the chain rule that
	$$ \nabla_\s \ell(\s,\bx_k) = \rho'(r_k) \nabla_\s r_k. $$
	Equation \eqref{eq:fishi} then implies
	\begin{equation*} \I(\s;\bx_k) = \dfrac{ \rho'(r_k)^2}{\rho(r_k)[1-\rho(r_k)]} \nabla_\s r_k \nabla_\s r_k^\top 
	.\end{equation*}
	Now $ \nabla_\s r_k = \dfrac{\s - \bx_k}{r_k}$, and note $\bx_k \neq \s \iff r_k > 0$. By definition of $\theta_k$, \\ $\bx_k - \s = r_k \begin{bmatrix} \cos \theta_k \\ \sin \theta_k \end{bmatrix}$, and therefore $\nabla_\s r_k = -\begin{bmatrix} \cos \theta_k \\ \sin \theta_k \end{bmatrix}$ for $r_k>0$. Thus
	\begin{equation} \I(\s;\bx_k) = \dfrac{ \rho'(r_k)^2}{\rho(r_k)[1-\rho(r_k)]} \begin{bmatrix} \cos^2(\theta_k) & \frac{\sin(2\theta_k)}{2} \\ \frac{\sin(2\theta_k)}{2} & \sin^2(\theta_k) \end{bmatrix} \label{eq:1fish}
	,\end{equation} 
	and applying \eqref{eq:totalfish} yields \eqref{eq:distancefish}.
\end{proof}
We now derive conditions for an optimal geometry, under the constraint that the agents be equidistant from the source location. 
Letting $r := r_1 = \hdots =r_N > 0$, the FIM becomes 
\begin{equation}
\bI(\s;\bx_{1:N}) =  \dfrac{ \rho'(r)^2}{\rho(r)[1-\rho(r)]} \sum_{k=1}^n \begin{bmatrix} \cos^2(\theta_k) & \frac{\sin(2\theta_k)}{2} \\ \frac{\sin(2\theta_k)}{2} & \sin^2(\theta_k) \end{bmatrix}. \label{eq:equifish}
\end{equation}
It is clear from \eqref{eq:equifish} that the optimal angles and radius can now be chosen independently.
\begin{theorem}[Optimal Sensor Geometry]
	\label{thm:OSG}
	Assume $\s \in S \subset \R^2$, and let $\ell(\s,\bx) = \rho(\|\s - \bx\|)$, where \hbox{$\rho:[0,\infty) \to (0,1)$} is continuously differentiable. Constrain \\ \hbox{$\bx_{1:N} \in \R^{2N}$} to be such that \begin{equation} \forall k,m \in \N_N, \ \| \bx_k - \s \| = \| \bx_m - \s \| > 0. \label{eq:equidist} \end{equation} Define $\theta_k := \mathrm{atan2} \big( \xi_{k,2} - s_2, \xi_{k,1} - s_1 \big)$ and $r := \| \bx_1 - \s \|$. 
	\be
	\item \label{pt:Xr} For any fixed $r > 0, \ \det \bI(\s;\bx_{1:N})$ is maximised if and only if
	\begin{equation} \sum_{k=1}^N \cos( 2\theta_k) = 0 \text{ and } \sum_{k=1}^N \sin( 2\theta_k) = 0. \label{eq:optimalangles} \end{equation}
	\item \label{pt:Xrr} Apply the additional constraint $r \in [r_1,r_2]$ for some $0 < r_1 \leq r_2$. Then for any fixed $\theta_1,...,\theta_N,\ \det \bI(\s;\bx_{1:N})$ is maximised
	if and only if \begin{equation} r \in \argmax_{x \in [r_1,r_2]} \dfrac{ \rho'(x)^2}{\rho(x)[1-\rho(x)]}. \label{eq:optimalr} \end{equation}
	\item \label{pt:Xrtheta}	Optimising jointly over $\theta_1,...,\theta_N \in \R$ and $r \in [r_1,r_2]$, $\det \bI(\s;\bx_{1:N})$ is maximised if and only if \eqref{eq:optimalangles} and \eqref{eq:optimalr} both hold. 
	\ne
\end{theorem}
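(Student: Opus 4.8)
The theorem has three parts, and the structure of the FIM in equation~\eqref{eq:equifish} is the key to all of them. Under the equidistance constraint~\eqref{eq:equidist}, the FIM factors as a scalar coefficient $c(r) := \rho'(r)^2 / \{\rho(r)[1-\rho(r)]\}$ times a matrix $A(\theta_1,\dots,\theta_N) := \sum_{k=1}^N \left[\begin{smallmatrix} \cos^2\theta_k & \frac12\sin 2\theta_k \\ \frac12\sin 2\theta_k & \sin^2\theta_k \end{smallmatrix}\right]$ that depends only on the angles. Since $\bI$ is a $2\times 2$ matrix, $\det \bI(\s;\bx_{1:N}) = c(r)^2 \det A(\theta_1,\dots,\theta_N)$. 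This separation is exactly what the remark preceding the theorem asserts, and it means the radius and angles can be optimised independently. The plan is to exploit this factorisation throughout.

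**Plan for part~\ref{pt:Xr} (optimal angles).**

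First I would fix $r > 0$, so maximising $\det\bI$ reduces to maximising $\det A$. I would compute $\det A$ explicitly. Using $\cos^2\theta_k = \tfrac12(1+\cos 2\theta_k)$ and $\sin^2\theta_k = \tfrac12(1-\cos 2\theta_k)$, write $A = \tfrac{N}{2}I + \tfrac12 B$, where $B := \left[\begin{smallmatrix} \sum_k \cos 2\theta_k & \sum_k \sin 2\theta_k \\ \sum_k \sin 2\theta_k & -\sum_k \cos 2\theta_k \end{smallmatrix}\right]$. Setting $u := \sum_k \cos 2\theta_k$ and $w := \sum_k \sin 2\theta_k$, the trace of $A$ is $N$ (independent of the angles) and $\det A = \det(\tfrac{N}{2}I + \tfrac12 B)$. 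A direct $2\times 2$ computation gives $\det A = \tfrac14(N^2 - u^2 - w^2)$, using $\det B = -(u^2+w^2)$ and $\mathrm{tr}\,B = 0$. Since $c(r)^2 > 0$ and $N^2$ is fixed, $\det\bI$ is maximised exactly when $u^2 + w^2$ is minimised, i.e.\ when $u = w = 0$, which is precisely~\eqref{eq:optimalangles}. The ``if and only if'' is immediate because $u^2+w^2 \geq 0$ with equality iff $u=w=0$.

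**Plan for part~\ref{pt:Xrr} (optimal radius) and part~\ref{pt:Xrtheta} (joint).**

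For part~\ref{pt:Xrr}, I would fix the angles, so $\det A$ is a fixed nonnegative constant and $\det\bI = c(r)^2 \det A$. Maximising over $r \in [r_1,r_2]$ therefore amounts to maximising $c(r)^2$, equivalently $c(r)$ since $c(r) \geq 0$; this is exactly the condition~\eqref{eq:optimalr}. (One subtlety: if the fixed angles make $\det A = 0$, the objective is identically zero and the ``only if'' direction is vacuous or fails; I would note that the stated maximiser condition still holds trivially, or restrict attention to the nondegenerate case, as the theorem presumably intends.) For part~\ref{pt:Xrtheta}, the joint maximum follows from the factorisation $\det\bI = c(r)^2\det A(\theta_{1:N})$ being a product of a factor depending only on $r$ and a nonnegative factor depending only on the angles; maximising a product of independent nonnegative factors is achieved iff each factor is individually maximised, giving both~\eqref{eq:optimalangles} and~\eqref{eq:optimalr}.

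**Main obstacle.**

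The computations are routine; the one genuine subtlety is handling the ``only if'' direction in the joint optimisation when $\det A$ can vanish. The clean separation-of-variables argument for part~\ref{pt:Xrtheta} relies on $\det A > 0$ at the optimum, which holds because the optimal angle configuration gives $\det A = N^2/4 > 0$. I would make sure the argument maximises the product correctly: since the maximal value of $\det A$ over all angles is strictly positive and the maximal value of $c(r)^2$ over $[r_1,r_2]$ is attained, the joint maximum of the product is their product, attained iff both factors are maximised. I expect this bookkeeping—rather than any hard inequality—to be the only place requiring care.
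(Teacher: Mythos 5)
Your proposal is correct, and for Statements \ref{pt:Xrr} and \ref{pt:Xrtheta} it follows essentially the same route as the paper: factor $\det \bI(\s;\bx_{1:N})$ into a radial coefficient times the determinant of the angle-dependent matrix sum, then observe the two can be optimised independently. The genuine difference is in Statement \ref{pt:Xr}: the paper does not prove the angular claim itself but imports it from \cite{bishop2010} (Theorem 2 there, matching its Equation (13)), whereas you prove it from scratch by writing $A = \frac{N}{2}I + \frac{1}{2}\bigl[\begin{smallmatrix} u & w \\ w & -u \end{smallmatrix}\bigr]$ with $u = \sum_k \cos 2\theta_k$, $w = \sum_k \sin 2\theta_k$, and computing $\det A = \frac{1}{4}(N^2 - u^2 - w^2)$, so that maximality holds iff $u = w = 0$. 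This two-line computation makes the result self-contained, and buys an explicit maximal value $N^2/4$, which you then use to justify the only-if direction of the joint optimisation. Two further points in your favour: first, you correctly square the scalar coefficient, $\det \bI = c(r)^2 \det A$ with $c(r) = \rho'(r)^2/\{\rho(r)[1-\rho(r)]\}$, whereas the paper's display \eqref{eq:detI} pulls the coefficient out of a $2\times 2$ determinant to the first power — a typo that happens to be harmless because $c(r) \geq 0$ implies $\argmax_x c(x) = \argmax_x c(x)^2$, so condition \eqref{eq:optimalr} is unaffected; your version is the one that actually compiles as stated. Second, your caveat about degenerate configurations (fixed angles with $\det A = 0$ in Statement \ref{pt:Xrr}, or $\rho' \equiv 0$ on $[r_1,r_2]$, where the objective vanishes identically and the only-if direction becomes vacuous or fails) flags a gap the paper silently shares; your resolution — noting $\det A$ attains $N^2/4 > 0$ at the optimal angles, so the joint product argument goes through in the nondegenerate case — is adequate, with the implicit standing assumption $N \geq 2$ so that \eqref{eq:optimalangles} is achievable.
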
 
\begin{proof}
	Under the constraint \eqref{eq:equidist}, 
	\begin{dgroup*} 
		\begin{dmath*} \det \bI(\s;\bx_{1:N})  =\dfrac{ \rho'(r)^2}{\rho(r)[1-\rho(r)]} \det \left( \sum_{k=1}^N  \begin{bmatrix} \cos^2(\theta_k) & \frac{\sin(2\theta_k)}{2} \\ \frac{\sin(2\theta_k)}{2} & \sin^2(\theta_k) \end{bmatrix}  \right) \end{dmath*}
		\begin{dmath} = \dfrac{ \rho'(r)^2}{\rho(r)[1-\rho(r)]} \det \left( \sum_{k=1}^N  \begin{bmatrix} \cos^2(\theta_k) & \frac{\sin(2\theta_k)}{2} \\ \frac{\sin(2\theta_k)}{2} & \sin^2(\theta_k) \end{bmatrix}^\top \right), \label{eq:detI} \end{dmath}
	\end{dgroup*}
	which has a form identical to the Fisher Information determinant for range-only measurements \cite[Equation (13)]{bishop2010}. It is also implied by \cite[Theorem 2]{bishop2010} that  $ \det \left( \displaystyle \sum_{k=1}^N  \begin{bmatrix} \cos^2(\theta_k) & \frac{\sin(2\theta_k)}{2} \\ \frac{\sin(2\theta_k)}{2} & \sin^2(\theta_k) \end{bmatrix}^\top \right)$ is maximised if and only if \eqref{eq:optimalangles} is satisfied, and this proves Statement \ref{pt:Xr}. The coefficient $\dfrac{ \rho'(r)^2}{\rho(r)[1-\rho(r)]}$ is a continuous function of $r$, and this guarantees the existence of a maximum on $[r_1,r_2]$. Statements \ref{pt:Xrr} and \ref{pt:Xrtheta} then follow immediately from \eqref{eq:detI}. 
\end{proof}
\begin{remark}
	A particular type of geometry that satisfies condition \eqref{eq:optimalangles} is to have the agents spaced out at equal angles about the source. This result is stated in \cite[Proposition 2]{bishop2010}. Other types geometries satisfying \eqref{eq:optimalangles} can also be found in the same work. 
\end{remark}
The above result identifies optimal measurement locations with respect to the Fisher Information determinant, given the source location $\s$. In practice, of course, $\s$ is unknown. However, as emphasized at the beginning of Section \ref{sec:fish}, an approximate solution to \eqref{eq:fullproblem} can be generated by optimising the Fisher Information determinant evaluated at the expected source location.
For some $0 < r_1 \leq r_2$, let the feasible set at time $n$ be $$\mathcal{X}_n = \{ \bx_{1:N} \mid \forall k,m \in \N_N,\ \| \bx_k - \hat{\s}_n \| = \| \bx_m - \hat{\s}_n \| \in [r_1,r_2] \}.$$
It follows directly from Theorem \ref{thm:OSG} that an exact solution to \eqref{eq:optimalsensor} is given by
\begin{equation} \forall k \in \N_N,\ \bx_{n+k} =  \hat{\s}_n + r \begin{bmatrix} \cos\theta_k \\ \sin\theta_k\end{bmatrix}, \label{eq:exactsoln} \end{equation}
where $r,\theta_1,...,\theta_N$ satisfy \eqref{eq:optimalangles} - \eqref{eq:optimalr}.

\section{Inexact knowledge of probability of detection} \label{sec:lambda}
In practice, the probability-of-detection function $\ell$ will not be known completely. Suppose that, instead, we have knowledge of a continuous function $\hat{\ell}:\R^q \times \R^q \to (0,1)$ that satisfies
\begin{equation} \forall \s,\x \in \R^q,\ \hat{\ell}(\s,\x) \geq \ell(\s,\x). \label{eq:envelope} \end{equation}
That is, $\hat{\ell}$ is an envelope for $\ell$. 
If the algorithm uses $\hat{\ell}$ in place of $\ell$ when computing \eqref{eq:bayesmap}, the corresponding version of the likelihood function is
\begin{equation} \hat{g}(d_k \mid \s; \bx_k) = \hat{\ell}(\s,\bx_k)^{d_k} \left[ 1 - \hat{\ell}(\s,\bx_k) \right]^{1 - d_k}. \end{equation}
By the same argument as \eqref{eq:pratio}, it is clear the convergence of the posterior depends on the ratio
 \begin{equation} \hat{Z}^{(i,j)}_k :=  \frac{ \hat{g}(d_k \mid \cc_i; \bx_k)}{ \hat{g}(d_k \mid \cc_j ; \bx_k)}, \end{equation} the logarithm of which has expected value
 \begin{align} \hat{\mu}^{(i,j)} _ k & = \Ex \left[ \ln \hat{Z}^{(i,j)}_k \mid \s \right] \\
& = \mu\big( \hat{ \ell}(\cc_i, \bx_k), \hat{\ell}( \cc_j, \bx_k),\ell(\s, \bx_k)\big).\end{align}
We emphasize that this expectation is taken with respect to the true distribution $g(\cdot \mid \s;\bx_k)$, and remind the reader that $\mu$ is defined in \eqref{eq:mufun}. The KL divergence originally defined in \eqref{eq:KLdef} now generalizes to
\begin{align} \K(\s, \ell \  || \  \x, \hat{\ell}  ; \bx_{1:n}) &:= D\bigg(G( \cdot \mid \s ; \bx_{1:n}) \ \bigg| \bigg| \  \hat{G}(\cdot \mid \x ; \bx_{1:n}) \bigg) \\
& = - \sum_{k=1}^n \mu\big( \hat{\ell}(\x, \bx_k), \ell( \s, \bx_k),\ell(\s, \bx_k)\big),
\label{eq:KLsum2} \end{align}
where $\hat{G}(\d \mid \s ; \bx_{1:n})  = \prod_{k=1}^n \hat{g}(d_k \mid \s;\bx_k)$.
 Using property \eqref{eq:muprop}, we then obtain
\begin{align}
\sum_{k=1}^n \hat{\mu}^{(i,j)}_k & = \sum_{k=1}^n \mu\big( \hat{\ell}(\cc_i, \bx_k), \hat{\ell}( \cc_j, \bx_k),\ell(\s, \bx_k)\big) \\
&  = \K(\s, \ell \  || \  \cc_j, \hat{\ell}  ; \bx_{1:n})  - \K(\s, \ell \  || \  \cc_i, \hat{\ell}  ; \bx_{1:n}).
\label{eq:KLdiff}
\end{align}
\begin{remark} \label{rem:same}
When $\hat{\ell} \neq \ell$, the convergence result in Theorem \ref{thm: estimator} holds when $\mu^{(i,j)} _ k$ is replaced with $\hat{\mu}^{(i,j)} _ k$. Similarly, Theorem \ref{th:general} holds when $\K(\s||\x ;\bx_{1:n})$ is replaced with $\K(\s, \ell \  || \  \x, \hat{\ell}  ; \bx_{1:n})$.\end{remark}
\subsection{Periodic Measurement Location Sequences} \label{sec:lambdarho}
Once again, we restrict attention to $n$-periodic agent location sequences. In general, according to Theorem \ref{th:general} and Remark \ref{rem:same}, the posterior will decay to zero at every index outside the set
\begin{equation} \B(\hat{\ell}\mid \ell) := \argmin_{i \in \N_m} \K(\s, \ell \  || \  \x, \hat{\ell}  ; \bx_{1:n}). \label{eq:Bdef2} \end{equation}
Consider two cases. In both cases, the algorithm is run assuming the same envelope $\hat{\ell}$. The first case is a special case in which $\ell = \hat{\ell}$, and in the second case $\ell$ is arbitrary. Asymptotically, the support of the posterior is then contained in $ \B(\hat{\ell}\mid\hat{\ell})$ and $\B(\ell\mid\hat{\ell})$ respectively. 
We now compare these two sets.

\begin{lemma} Suppose that, \begin{equation} \forall k \in \N_n,\ \hat{\ell}(\cc_j,\bx_k) \geq \hat{\ell}(\cc_i,\bx_k). \label{eq:closer} \end{equation} Then 
	\begin{dmath}
\K(\s, \ell \  || \  \cc_j, \hat{\ell}  ; \bx_{1:n}) < 
\K(\s, \ell \  || \  \cc_i, \hat{\ell}  ; \bx_{1:n}) \\ \implies \K(\s, \hat{\ell} \  || \  \cc_j, \hat{\ell}  ; \bx_{1:n}) < 
\K(\s, \hat{\ell} \  || \  \cc_i, \hat{\ell}  ; \bx_{1:n}). \label{eq:overestimate}	
\end{dmath} \label{lem:comparison}
\end{lemma}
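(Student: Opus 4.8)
The plan is to reduce both KL-divergence inequalities in \eqref{eq:overestimate} to sign statements about a single sum of the form $\sum_{k=1}^n \mu(\cdot,\cdot,\cdot)$ using the additivity identity \eqref{eq:muprop}, and then exploit the fact that $\mu(x,y,z)$ is \emph{affine} in its third argument. To fix notation, write $b_k := \hat{\ell}(\cc_i,\bx_k)$, $c_k := \hat{\ell}(\cc_j,\bx_k)$, $a_k := \ell(\s,\bx_k)$, and $e_k := \hat{\ell}(\s,\bx_k)$. Starting from \eqref{eq:KLsum2}, each KL divergence is a sum of terms $-\mu(\hat{\ell}(\cc,\bx_k),\ell(\s,\bx_k),\ell(\s,\bx_k))$, so applying \eqref{eq:muprop} termwise gives
$$\K(\s,\ell \,||\, \cc_i,\hat{\ell};\bx_{1:n}) - \K(\s,\ell \,||\, \cc_j,\hat{\ell};\bx_{1:n}) = \sum_{k=1}^n \mu(c_k,b_k,a_k),$$
and, replacing the common source model $\ell$ by $\hat{\ell}$ (so $a_k$ becomes $e_k$ in both the second and third slots before cancellation),
$$\K(\s,\hat{\ell} \,||\, \cc_i,\hat{\ell};\bx_{1:n}) - \K(\s,\hat{\ell} \,||\, \cc_j,\hat{\ell};\bx_{1:n}) = \sum_{k=1}^n \mu(c_k,b_k,e_k).$$
Thus the hypothesis of \eqref{eq:overestimate} is exactly $\sum_k \mu(c_k,b_k,a_k) > 0$, and the desired conclusion is $\sum_k \mu(c_k,b_k,e_k) > 0$; the two sums differ only in their third argument, $a_k$ versus $e_k$.

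Next I would establish the key monotonicity fact. From the defining expression \eqref{eq:mufun}, $z \mapsto \mu(x,y,z)$ is affine with slope $\ln\!\big(\tfrac{x(1-y)}{y(1-x)}\big)$, which is nonnegative precisely when $x \geq y$. Hypothesis \eqref{eq:closer} states $c_k \geq b_k$ for every $k$, so each map $z \mapsto \mu(c_k,b_k,z)$ is nondecreasing. The envelope assumption \eqref{eq:envelope}, applied at the true source, gives $e_k = \hat{\ell}(\s,\bx_k) \geq \ell(\s,\bx_k) = a_k$. Combining these yields the termwise inequality $\mu(c_k,b_k,e_k) \geq \mu(c_k,b_k,a_k)$ for every $k$. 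Summing over $k$ and invoking the hypothesis then gives $\sum_k \mu(c_k,b_k,e_k) \geq \sum_k \mu(c_k,b_k,a_k) > 0$, which is the claim.

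There is no serious obstacle here; the whole argument hinges on a single structural observation, namely that $\mu$ is linear in its third argument so that monotonicity in $z$ is governed solely by the sign of a logarithm. The only point requiring care is to pair the two hypotheses correctly: the centre-ordering condition \eqref{eq:closer} controls the \emph{sign of the slope} (making each term nondecreasing in $z$), whereas the envelope inequality \eqref{eq:envelope} controls the \emph{direction} in which the third argument is perturbed ($e_k \geq a_k$). Because both point the same way, the termwise comparison preserves rather than reverses the strict inequality upon summation, and the result follows.
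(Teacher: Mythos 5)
Your proposal is correct and is essentially the paper's own argument: the paper likewise reduces both inequalities to sums of $\mu$-terms via \eqref{eq:KLdiff}, and its key computation, $\sum_{k}\mu(\hat{\omega}^i_k,\hat{\omega}^j_k,\hat{z}_k)-\sum_{k}\mu(\hat{\omega}^i_k,\hat{\omega}^j_k,z_k)=\sum_{k}(\hat{z}_k-z_k)\ln\bigl[\hat{\omega}^i_k(1-\hat{\omega}^j_k)/(\hat{\omega}^j_k(1-\hat{\omega}^i_k))\bigr]$, is exactly your observation that $\mu$ is affine in its third argument with slope equal to that log-odds ratio, signed by \eqref{eq:closer} and paired with $\hat{z}_k \geq z_k$ from \eqref{eq:envelope}. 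Your phrasing as a termwise monotonicity statement rather than a summed difference is cosmetic; the substance matches.
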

\begin{proof}
	For ease of notation, let $\hat{\omega}^i_k := \hat{\ell}(\cc_i,\bx_k)$, $z_k := \ell(\s,\bx_k)$ and $\hat{z}_k := \hat{\ell}(\s,\bx_k)$.
	Applying \eqref{eq:KLdiff}, we obtain
	\begin{align} &{\left[	\K(\s, \hat{\ell} \  || \  \cc_j, \hat{\ell}  ; \bx_{1:n})  - \K(\s, \hat{\ell} \  || \  \cc_i, \hat{\ell}  ; \bx_{1:n}) \right]} \nonumber  \\
	& - {\left[	\K(\s, \ell \  || \  \cc_j, \hat{\ell}  ; \bx_{1:n})  - \K(\s, \ell \  || \  \cc_i, \hat{\ell}  ; \bx_{1:n})\right]} \nonumber \\
	& =  \sum_{k=1}^n \mu \big( \hat{\omega}^i_k,\hat{\omega}^j_k,\hat{z}_k\big) - \sum_{k=1}^n \mu \big( \hat{\omega}^i_k,\hat{\omega}^j_k,z_k\big) \nonumber \\
	& = \sum_{k=1}^n (\hat{z}_k - z_k) \ln \left[ \frac{\hat{\omega}^i_k(1 - \hat{\omega}^j_k)}{\hat{\omega}^j_k(1 - \hat{\omega}^i_k)}\right]. \label{eq:Kdif}
	\end{align}
	By property \eqref{eq:envelope}, $\hat{z}_k \geq z_k$ for all $k$. Furthermore, the assumption \eqref{eq:closer} is equivalent to $\hat{\omega}^i_k \leq \hat{\omega}^j_k$ for all $k$, which then implies the RHS of \eqref{eq:Kdif} is non-positive. The result then follows. 
\end{proof}
\begin{proposition} \label{prop:wronglambda}
Let $(\bx_k)_{k \in \N }$ be an $n$-periodic sequence such that,
\begin{dmath} {\forall i \in \N_M,\ \exists j \in \B(\ell\mid\hat{\ell}),\ \forall k \in \N_n,} \\ {\hat{\ell}(\cc_j,\bx_k) \geq \hat{\ell}(\cc_i,\bx_k).} \label{eq:closeness} \end{dmath} Then $ \B(\hat{\ell}\mid\hat{\ell}) \subset \B(\ell\mid\hat{\ell})$.
\end{proposition}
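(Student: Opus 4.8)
The plan is to establish the inclusion elementwise: fix an arbitrary $i \in \B(\hat{\ell}\mid\hat{\ell})$ and prove $i \in \B(\ell\mid\hat{\ell})$. Hypothesis \eqref{eq:closeness} is tailored precisely so that, for this $i$, there exists a companion index $j \in \B(\ell\mid\hat{\ell})$ obeying the closeness condition \eqref{eq:closer}. The strategy is to select such a $j$ and then invoke Lemma \ref{lem:comparison} to transport a strict KL-divergence inequality from the true-$\ell$ setting into the $\ell = \hat{\ell}$ setting.

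I would argue by contradiction, supposing $i \notin \B(\ell\mid\hat{\ell})$. Combined with $j \in \B(\ell\mid\hat{\ell})$, this forces $i \neq j$, and since $j$ attains the minimum of the objective $m \mapsto \K(\s, \ell \  || \  \cc_m, \hat{\ell}  ; \bx_{1:n})$ over $m \in \N_M$ while $i$ does not, we obtain the strict inequality
\[ \K(\s, \ell \  || \  \cc_j, \hat{\ell}  ; \bx_{1:n}) < \K(\s, \ell \  || \  \cc_i, \hat{\ell}  ; \bx_{1:n}). \]

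Because $j$ was chosen so that $\hat{\ell}(\cc_j,\bx_k) \geq \hat{\ell}(\cc_i,\bx_k)$ for every $k \in \N_n$, the pair $(i,j)$ satisfies condition \eqref{eq:closer}, so Lemma \ref{lem:comparison} applies and yields
\[ \K(\s, \hat{\ell} \  || \  \cc_j, \hat{\ell}  ; \bx_{1:n}) < \K(\s, \hat{\ell} \  || \  \cc_i, \hat{\ell}  ; \bx_{1:n}). \]
This contradicts $i \in \B(\hat{\ell}\mid\hat{\ell})$, since membership in that set requires $\cc_i$ to minimise $m \mapsto \K(\s, \hat{\ell} \  || \  \cc_m, \hat{\ell}  ; \bx_{1:n})$, so no index may strictly undercut it. Hence $i \in \B(\ell\mid\hat{\ell})$, and as $i$ was arbitrary, the inclusion $\B(\hat{\ell}\mid\hat{\ell}) \subset \B(\ell\mid\hat{\ell})$ follows.

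Since the argument is essentially a direct application of Lemma \ref{lem:comparison}, I do not anticipate a genuine obstacle. The only point needing care is the bookkeeping of which probability-of-detection function occupies the ``true'' slot (left of $||$) versus the ``assumed'' slot (right of $||$) in each $\K$ term, together with confirming that the existential clause of \eqref{eq:closeness} supplies exactly the pairing $(i,j)$ whose closeness licenses the lemma.
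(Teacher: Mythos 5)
Your proof is correct and is essentially the paper's own argument: the paper proves the contrapositive ($i \notin \B(\ell\mid\hat{\ell}) \implies i \notin \B(\hat{\ell}\mid\hat{\ell})$) directly, while you package the identical steps---extracting $j$ from \eqref{eq:closeness}, deriving the strict KL inequality from the definition of $\B(\ell\mid\hat{\ell})$, and transferring it via Lemma \ref{lem:comparison}---as a proof by contradiction. The bookkeeping point you flag is handled exactly as in the paper, so there is nothing to fix.
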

\begin{proof}
	Suppose $i \in \N_M \setminus \B(\ell\mid \hat{\ell})$. Then by assumption \eqref{eq:closeness}, 
	$$ \exists j \in \B(\ell \mid \hat{\ell}),\ \forall k \in \N_n,\ \hat{\ell}(\cc_j,\bx_k) \geq \hat{\ell}(\cc_i,\bx_k). $$
	Definition \eqref{eq:Bdef2} implies,
	$$ \K(\s, \ell \  || \  \cc_j, \hat{\ell}  ; \bx_{1:n}) < \K(\s, \ell \  || \  \cc_i, \hat{\ell}  ; \bx_{1:n}).$$
	We can then apply Lemma \ref{lem:comparison} to obtain 
	$$ \K(\s, \hat{\ell} \  || \  \cc_j, \hat{\ell}  ; \bx_{1:n}) < \K(\s, \hat{\ell} \  || \  \cc_i, \hat{\ell}  ; \bx_{1:n}),$$
	which implies $i \notin \B(\hat{\ell}\mid \hat{\ell})$. We have shown ${ \B(\ell\mid\hat{\ell})^\complement \subset \B(\hat{\ell}\mid\hat{\ell})^\complement}$, which is equivalent to the result. 
\end{proof}
\begin{remark} \label{rem:conservative}
Proposition \ref{prop:wronglambda} can be interpreted as follows. Assuming the constraints \eqref{eq:closeness} on $\bx_k$ are satisfied, as long as $\hat{\ell}$ remains an envelope for $\ell$, the limiting support of the posterior cannot shrink compared to the case where $\ell = \hat{\ell}$. Thus, the algorithm behaves conservatively in the limit. The result holds even if $\ell$ is time-varying.
\end{remark}
\begin{remark} \label{rem:sensible}
When the envelope $\hat{\ell}$ is strictly decreasing with distance from the source, the assumption \eqref{eq:closeness} requires all the agents to be closer to some centre in $\B(\ell\mid\hat{\ell})$ than to the other centres. This suggests a sensible strategy would be to drive the agents towards the current MAP estimate (i.e. a maximiser of the posterior).
\end{remark}

\section{Numerical example and simulation results}
\label{sec:sim}
In this section, we examine a concrete example involving the 2D localisation of an electromagnetic source. 
We present simulation results to supplement the analytical results of the previous sections, and to demonstrate the effectiveness of the Bayesian estimation algorithm when used in the loop with a control law that guides the agents towards D-optimal measurement locations.
Consider a source antenna located at ground level, transmitting an RF signal of wavelength $\lambda$, with input power $P_T$ and effective area $A_T$. 
Suppose each agent is a UAV equipped with a receiving antenna of effective area $A_R$. 
Given source location $[\begin{smallmatrix} \s \\ 0 \end{smallmatrix}] \in \R^3$, the power received by an agent at location $[\begin{smallmatrix}
\x \\ z
\end{smallmatrix}] \in \R^3$ can be modelled by the Friis transmission formula \cite{friis1946}:
$$P_R(\s,\x) = \dfrac{A_R A_T P_T}{ \lambda^2( \| \s - \x \|^2 + z^2)}.$$
We simulate a team of four agents, constrained to fly at a constant altitude $z>0$, with positions in the plane that evolve according to
\begin{equation} \dot{ \x}_i(t) = \uu_i(t), \label{eq:dynamics} \end{equation}
where $\uu_i:[0,\infty) \to \R^2$ is the control signal applied to agent $i$. An agent at location $\bx_k \in \R^2$ reports a binary measurement $d_k$ by comparing the received power measured at time $t_k$ with a threshold $\eta>0$ according to
$$ d_k = \begin{cases}
0,& P_R(\s,\bx_k)  + W_k< \eta \\
1,& P_R(\s,\bx_k)  + W_k \geq \eta
\end{cases},$$
where $W_k \sim \mathcal{N}(0,\sigma^2)$ accounts for sensor noise. 
The probability-of-detection function is therefore given by
$$ \ell(\s,\x) = \mathcal{Q}\left(\dfrac{\eta - P_R(\s,\x)}{\sigma} \right),$$
where 
$$ \mathcal{Q}(x) : = \frac{1}{\sqrt{2\pi}}\int_x^\infty e^{-\frac{u^2}{2}} du  $$
is the Q-function. 

Suppose there is a maximum transmission delay of $\frac{\tau}{2} \geq 0$ seconds between the agents and the fusion centre.
We ignore the effects of packet drop, and constrain the agents to take measurements synchronously every $T>\tau$ seconds. Recalling that subscripts are assigned according to the order in which the measurements arrive at the fusion centre, this implies
$$ 0 \leq t_1 = ... = t_4 < t_5 = t_6 = ...,$$
where $t_{k+4} - t_k = T$. The fusion centre processes all four measurements pairs $(\bx_k,d_k),...,(\bx_{k+3},d_{k+3})$ within the interval $[t_k,t_k + \frac{\tau}{2}]$. It then computes the posterior mean
$$ \bar{\s}_k :=  \sum_{j=1}^M \hat{p}_k(i \mid d_{1:k} ; \bx_{1:k})\cc_j,$$
which it transmits back to the agents. All agents then receive $\bar{\s}_k$ no later than $t_k + \tau$, at which time they synchronously update their local copies of the mean $\hat{\s}_i(t_k + \tau) = \bar{\s}_k$. Thus, these local copies evolve in continuous-time according to
$$ \hat{\s}_i(t) := \bar{\s}_{\kappa(t - \tau)},$$ where $\kappa:[0,\infty) \to \N,\ \kappa(t) := \max \{ k \mid t_k \leq t \}.$
The posterior mean is used as an input to the controller proposed below:
\begin{equation} \uu_i(\x_i,\hat{\s}_i) = - \left( \x_i - \hat{\s}_i - \bd_i \right) \label{eq:econtrol} \end{equation}
where $\bd_i:= r\begin{bmatrix} \cos \theta_i \\ \sin \theta_i \end{bmatrix}$ is chosen according to \eqref{eq:optimalangles} - \eqref{eq:optimalr}. This drives the agents towards the optimal locations dictated by \eqref{eq:exactsoln}. 

For the simulations below, the search region $S$ is a 75 m $\times$ 75 m planar region (at ground level), and the source location is sampled from the uniform distribution over $S$. We choose $M$ centres, aligned in a uniform grid over a 100 m $\times$ 100 m region containing $S$ at the centre. Parameter values are $A_R = A_T = 1\ \mathrm{m}^2$, $P_T = 1$ W, $\lambda = 1$ m, $z = 10$ m, $\eta = 5 \times 10^{-3}$ W, $\sigma = 2.5 \times 10^{-3}$ W, $T = 0.04$ s and $\tau = 0.02$ s, unless otherwise stated. Each agent was initialized as shown in Figure \ref{fig:trajectories}. A uniform prior for the source location was used to initialize the Bayesian updates.

To numerically examine the effects of discretisation on estimation performance, for every $M \in \{10^2, 20^2,...,50^2\}$ we run 100 Monte Carlo trials and compute the RMS estimation error $e_k$ by averaging $\|\bar{\s}_k - \s \|$. A supplementary animation of the simulation for $M = 30^2$ is available at \texttt{\url{https://youtu.be/l8Awf0KCt4s}}. The results are plotted in Figure \ref{fig:still} for up to $k = 1000$ measurements. 
In practice, the entropy $h_k$ of the posterior is a good indicator of convergence. We approximately evaluate $e_\infty$ by computing the RMS error at $k = 1000$, averaging only the trials for which $h_{1000} < 1$ nat. These results are recorded in Table \ref{tab: stats}, and we observe that $e_\infty$ decreases monotonically with grid spacing.
\begin{table}[h]
	\centering
	\begin{threeparttable}
		\caption{Asymptotic estimation error.}
		\label{tab: stats}
		\begin{tabular}{|c|c|c|c|c|c|}
			\hline
			$M=$ & $ 10^2$ & $ 20^2$ & $ 30^2$ & $ 40^2$ & $ 50^2$ \\ \hline 
			Grid Spacing (m) & 10 & 5 & 3.33 & 2.5 & 2 \\ 
			Approx. $e_\infty$ (m) & 3.95 & 1.96 & 1.43 & 1.04 & 0.84 \\
			Trials with $h_{1000}<1$ & 99 \% & 97 \% & 96 \% & 97 \% & 98 \% \\ 
			\hline
		\end{tabular}
	\end{threeparttable}
\end{table}
\begin{figure*}[p]
	\centering
	\begin{subfigure}{1.7\columnwidth}
		\centering
		\includegraphics[width=\linewidth]{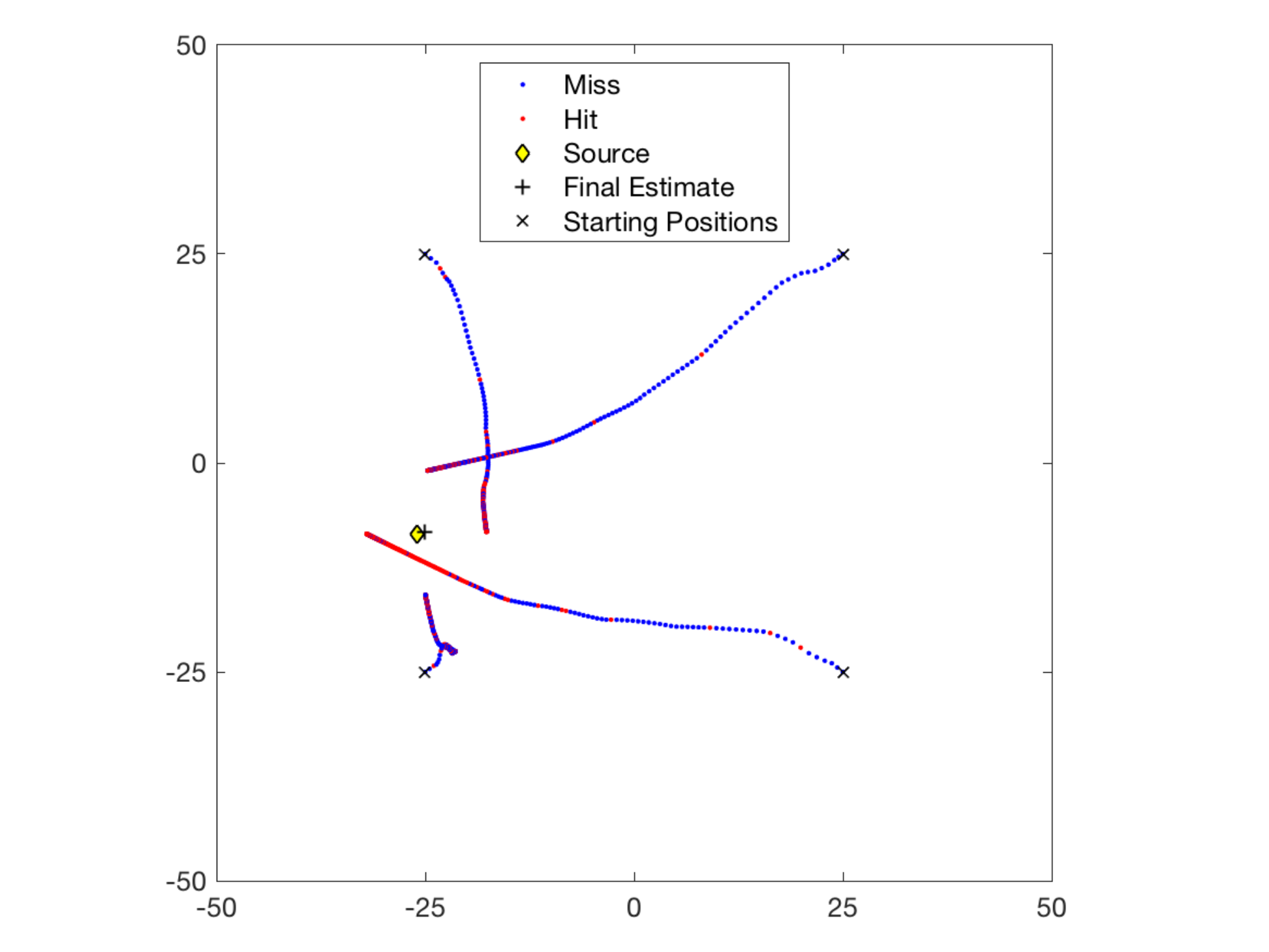}
		\caption{History of measurement pairs ($M = 30^2$, units: m).}
		\label{fig:trajectories}
	\end{subfigure}
	\begin{subfigure}{1.7\columnwidth}
	\centering
	\includegraphics[width=\linewidth]{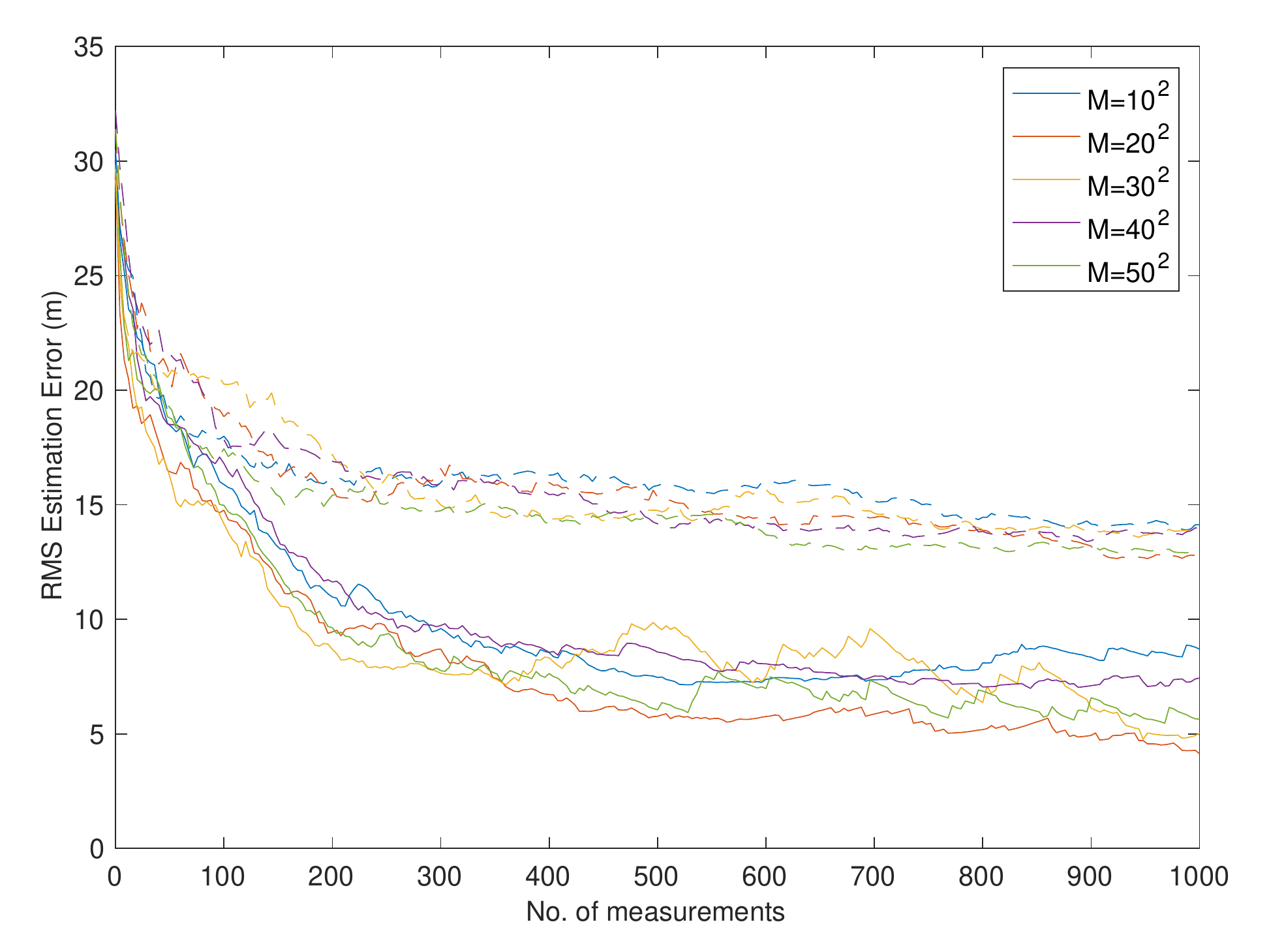}
	\caption{With controller (solid lines) vs. without controller (dashed lines)}
	\label{fig:still}
\end{subfigure}
\caption{Simulation results}
\end{figure*}
\begin{figure*}[p]
	\begin{subfigure}{1.7\columnwidth}
		\centering
		\includegraphics[width=\linewidth]{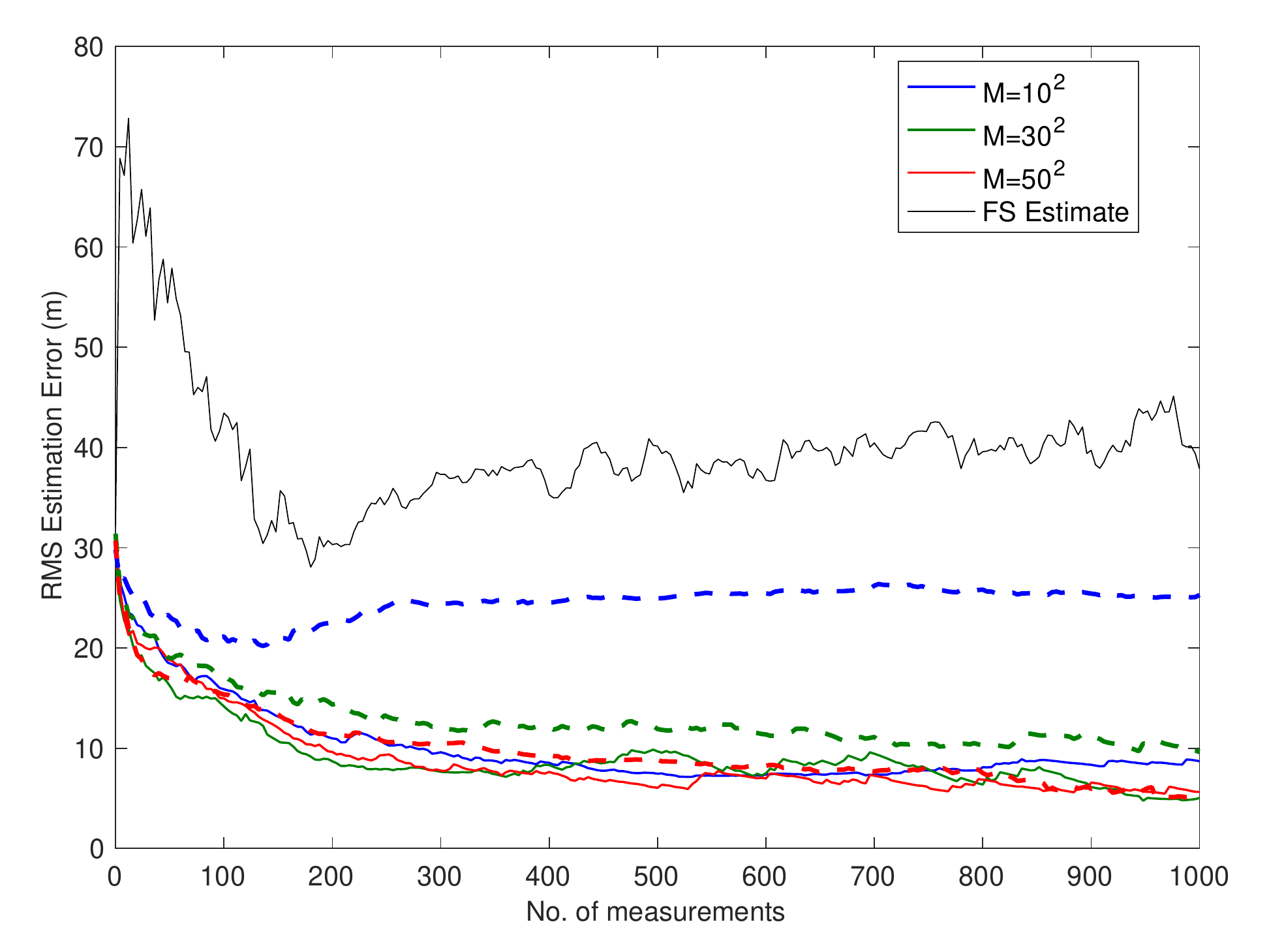}
		\caption{Discretised Bayesian estimator (solid coloured lines) vs. FS vs. SIR with $M$ particles (dashed coloured lines)}
		\label{fig:bayes}
	\end{subfigure} 
\begin{subfigure}{1.7\columnwidth}
	\centering
	\includegraphics[width=\linewidth]{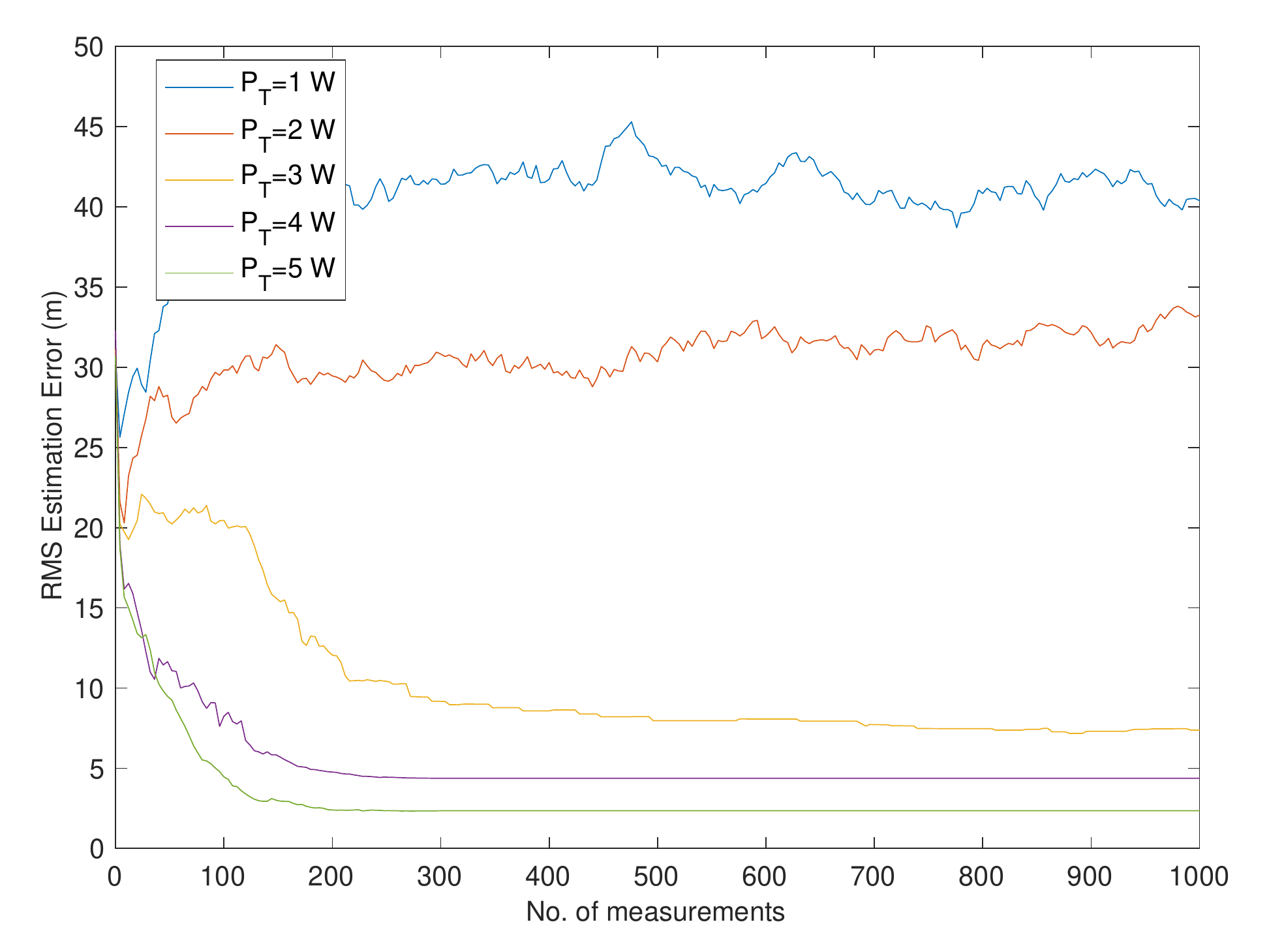}
	\caption{Effect of changing $P_T$. (\hbox{$\hat{P}_T =5$ W}, $M = 20^2$)}
	\label{fig:imperfect}
\end{subfigure}
\caption{Simulation results}
\end{figure*}
\subsection{Comparison with other approaches}
As mentioned in the introduction, we employ the same Bayesian estimation algorithm adopted in \cite{vergassola2007, ristic2016}. Here, we compare its performance with the Fisher scoring (FS) approach of \cite{vijayakumaran2007} and the particle filtering approach in \cite{ozdemir2008}. 

The complexity of the full ML estimator in \cite{vijayakumaran2007} is $O(q^3k^2)$ for the $k$th measurement. The required memory grows linearly with $k$. Since this can be impractical for real-time processing, \cite{vijayakumaran2007} also proposes a real-time approximation, which has complexity $O(q^3)$ per measurement and requires constant memory.
In contrast~\cite{selvaratnam2017b} establishes the complexity and memory requirements of our Bayesian method as both of $O(M)$, constant with respect to the number of measurements. Recall that $M$ is the number of centres.   
The RMS error obtained by implementing the full ML estimator in \cite{vijayakumaran2007} is plotted in Figure \ref{fig:bayes}.
This ML estimator is based on the Newton method, and since the log-likelihood function for this problem is non-concave, there are no convergence guarantees as $k \to \infty$. In the simulations, five Newton iterations are performed per measurement. 
To facilitate a meaningful comparison between the two approaches, the control-law \eqref{eq:econtrol} is also employed to direct the agents under the ML approach, but with the posterior mean $\hat{\s}$ replaced by the ML estimate.

Ozdemir et al.~\cite{ozdemir2008} employ a particle filter to estimate the location of a moving source using binary measurements. Specifically, they adopt sequential importance re-sampling (SIR)~\cite[Algorithm 4]{arulampalam2002a}. 
As noted in the introduction, and explained in \cite[Footnote 5]{arulampalam2002a}, such an approach is not well suited to a stationary source. In particular, the re-sampling step introduces additional complexity without improving performance. The numerical results plotted in Figure \ref{fig:bayes} support this claim. 

 \subsection{Effect of control strategy}
 We now examine the effectiveness of the control strategy \eqref{eq:econtrol}, which drives the agents into the D-optimal geometries defined by \eqref{eq:exactsoln}. In Figure \ref{fig:still}, the RMS estimation error of the Bayesian algorithm with the control-law in the loop is plotted against a scenario in which the agents remain fixed at their initial positions (which are evenly spaced throughout the environment). The motion of the agents significantly increases the rate at which the estimation error decays. A sample trajectory induced by the control law is plotted in Figure \ref{fig:trajectories}, along with the full history of measurement pairs $(\bx_k,d_k)$. 
 
 \subsection{Effect of inexact knowledge of probability of detection}
 To examine the effect of inexact knowledge of $\ell$, we fix the assumed value of the transmitted power at $\hat{P}_T = 5 W$, and vary the true value $P_T$ between 1  W and 5 W. Thus the assumed probability-of-detection function $\hat{\ell}$ remains an envelope for $\ell$. 
 We use $M = 20^2$ grid points for the simulations. Consistent with the strategy proposed in Remark \ref{rem:sensible}, we modify the control law \eqref{eq:econtrol} by replacing $\hat{\s}$ with the MAP estimate.
 This time, the angles $\theta_i$ are chosen according to \eqref{eq:optimalangles}, but we set $r = 2.5$ m so that the agents are driven to converge to points that are closer to the MAP estimate than the other centres. The results of 100 Monte Carlo trials are plotted in Figure \ref{fig:imperfect}. We observe a graceful degradation in RMS estimation error for $P_T \geq 3$ W, but the estimator ceases to be effective when the transmitted power falls to 2 W or less.   

\section{Conclusion}
The localisation of a stationary source using binary measurements is addressed in this paper. The adopted estimation procedure discretises the search region into a finite set of centres, and uses a Bayesian update rule to maintain a posterior over these centres. A theoretical analysis of this discrete posterior is presented. Conditions on the sequence of measurement locations are derived which guarantee posterior consistency when the source is coincident with a centre. The more general case of an arbitrarily located source is studied by restricting attention to periodic measurement location sequences. In this case, the algorithm asymptotically selects the indices of centres which minimise KL divergence from the true measurement probability distribution. The results described above hold for general, continuous probability-of-detection functions. Specific results are also derived for range-dependent probability-of-detection functions. 
 
 The design of D-optimal measurement locations with respect to the Bayesian Information Matrix is also formulated mathematically. Although obtaining an analytic solution is intractable, a relaxed version of the problem is proposed, which maximises the Fisher Information determinant about the expected source location. 
 The FIM for a range-dependent probability of detection is then derived, and a closed-form solution is established to a constrained version of the resulting optimisation problem.
The effect of having inexact knowledge of the probability-of-detection function is examined by assuming knowledge of an envelope for the function. Under certain conditions, the asymptotic support of the posterior is shown to be no smaller than when the true probability-of-detection function coincides with the assumed envelope. Finally, a numerical example is simulated to supplement the theoretical results. A control strategy is proposed to guide the agents into the D-optimal measurement locations, and a comparison of the algorithm with the approaches of \cite{vijayakumaran2007,ozdemir2008} is also presented.
 
There are several promising directions for future work, including extending the algorithm to deal with multiple and/or moving targets. The closed-loop properties of the system should also be studied theoretically, and control strategies developed that are time-optimal and/or guarantee consistency. The works \cite{liu2013a, masazade2012} may also offer insight into designing iterative methods for solving \eqref{eq:fullproblem} directly. 
Finally, a distributed implementation of the algorithm should be considered, while incorporating the effects of transmission delay, asynchronous updates, and packet drop.

\label{sec:conc}
\section*{Acknowledgement}
The authors would like to thank Brian D.O. Anderson for his insightful comments, which inspired some key results in this paper.
{\small 
\bibliographystyle{ieeetr}
\bibliography{Bibliography/searchx}}

\appendices
\section{Infinite products of random variables}
\label{app:infiniteprod}
This appendix contains some results concerning the convergence of infinite products of random variables, on which the rest of the paper relies. \\

\begin{lemma}
	\label{lem:zeromean}
	Let $(W_k)_{k \in \N}$ be a sequence of independent random variables such that $\Ex[W_k] = 0$ for all $k$. If the sequence is bounded, that is $$  \exists M > 0 \text{ s.t. } \forall k,\ |W_k|<M, $$ then
	$$ \forall p > \frac{1}{2}, \ \lim_{n \to \infty} \dfrac{1}{n^p} \sum_{k=1}^n W_k  = 0 \text{ a.s.. }$$
\end{lemma}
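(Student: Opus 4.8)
The plan is to recognise this as a special case of Kolmogorov's strong law of large numbers for non-identically-distributed independent variables, and to establish it via Kolmogorov's convergence criterion for series of independent random variables combined with Kronecker's lemma. First I would exploit the boundedness hypothesis to control the second moments: since $\Ex[W_k] = 0$ and $|W_k| < M$, we have $\var(W_k) = \Ex[W_k^2] \leq M^2$ for every $k \in \N$.

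Next, fix $p > \tfrac{1}{2}$ and set $b_n := n^p$, which is a strictly increasing sequence of positive reals diverging to infinity. Introduce the rescaled variables $Y_k := W_k / b_k$. These remain independent with $\Ex[Y_k] = 0$, and their variances satisfy $\var(Y_k) = \var(W_k)/k^{2p} \leq M^2 / k^{2p}$, so
$$ \sum_{k=1}^\infty \var(Y_k) \leq M^2 \sum_{k=1}^\infty \frac{1}{k^{2p}} < \infty, $$
where the $p$-series converges precisely because $2p > 1$. This single inequality is where the hypothesis $p > \tfrac{1}{2}$ is used, and it is the crux of the argument.

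By Kolmogorov's convergence criterion, a series of independent zero-mean random variables with summable variances converges almost surely; applying this to $(Y_k)$ shows that $\sum_{k=1}^\infty W_k / b_k$ converges a.s. Finally, I would invoke Kronecker's lemma pathwise: on the probability-one event where $\sum_k W_k(\omega)/b_k$ converges, and since $b_n \uparrow \infty$, Kronecker's lemma gives $\tfrac{1}{b_n}\sum_{k=1}^n W_k(\omega) \to 0$. As $b_n = n^p$ and this holds almost surely, the claim follows.

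The main obstacle is only nominal: if one wants a fully self-contained proof rather than citing the Kolmogorov–Kronecker machinery, the genuine work lies in proving the almost-sure convergence of the series $\sum_k Y_k$, which is itself typically established through Kolmogorov's maximal inequality and a Cauchy-criterion argument. Since that is a standard result in any graduate probability reference, I would simply cite it; the remaining variance bookkeeping and the Kronecker step are entirely routine.
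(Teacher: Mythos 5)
Your proposal is correct and follows essentially the same route as the paper's own proof: both bound $\var(W_k)$ using boundedness, rescale by $k^{-p}$ so the variance series converges because $2p>1$, invoke Kolmogorov's convergence criterion for the almost-sure convergence of $\sum_k W_k/k^p$, and finish with Kronecker's lemma. Your identification of the variance summability as the only place $p>\tfrac{1}{2}$ enters matches the paper exactly.
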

\begin{proof}
	The sequence $W_k$ is bounded, and therefore there exists $C>0$ such that $\var[ X_k] \leq C$ for all $k$. Let $p > \frac{1}{2}$, and note that $\Ex \left[ k^{-p} W_k \right]=0$ and $\var \left[ k^{-p} W_k \right] \leq k^{-2p}C$. This implies 
	$$\forall n,\ \sum_{k=1}^n \var \left[ \frac{W_k}{k^p} \right] \leq \sum_{k=1}^n \frac{C}{k^{2p}}. $$
	Now $2p>1$, and therefore $ \sum_{k=1}^\infty \var \left[ k^{-p}W_k \right]  < \infty $ by \cite[Theorem 3.28]{rudin1964-1}.
	Applying \cite[Theorem 12.2]{williams1991} yields $$\sum_{k=1}^\infty \frac{W_k}{k^p} < \infty \text{ a.s.},$$ and Kroneckers' Lemma \cite[Lemma 12.7]{williams1991} then implies the result.
\end{proof}

\begin{corollary}
	\label{cor:limsups}
	Let $(X_k)_{k \in \N}$ be a sequence of independent, bounded random variables. Then 
	$$ \forall p > \frac{1}{2}, \ \limsup_{n \to \infty}  \dfrac{1}{n^p} \sum_{k=1}^n X_k = \limsup_{n \to \infty}  \dfrac{1}{n^p} \sum_{k=1}^n \Ex[X_k] \text{ a.s..}$$
\end{corollary}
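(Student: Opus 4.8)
The plan is to reduce the statement directly to Lemma~\ref{lem:zeromean} by centering the random variables. First I would introduce the centered sequence $W_k := X_k - \Ex[X_k]$ for each $k \in \N$. By construction $\Ex[W_k] = 0$, so the only thing left to check before invoking the lemma is that $(W_k)$ is independent and uniformly bounded.

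Independence is immediate: subtracting the deterministic constant $\Ex[X_k]$ from each $X_k$ is a measurable transformation applied coordinate-wise, so $(W_k)$ inherits the independence of $(X_k)$. For uniform boundedness, suppose $|X_k| < M$ for all $k$; then $|\Ex[X_k]| \le \Ex[|X_k|] < M$, and the triangle inequality gives $|W_k| = |X_k - \Ex[X_k]| < 2M$ for all $k$. Thus $(W_k)$ satisfies exactly the hypotheses of Lemma~\ref{lem:zeromean}, and I would apply it to conclude that for every $p > \tfrac{1}{2}$,
$$ \lim_{n \to \infty} \frac{1}{n^p} \sum_{k=1}^n W_k = 0 \quad \text{a.s.} $$

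The final step is an elementary $\limsup$ manipulation. I would write the partial sum as $\sum_{k=1}^n X_k = \sum_{k=1}^n W_k + \sum_{k=1}^n \Ex[X_k]$, divide through by $n^p$, and take $\limsup$ as $n \to \infty$. Since the $W_k$-term converges to zero almost surely, and adding a sequence that tends to zero does not alter the $\limsup$ of the remaining sequence, I obtain
$$ \limsup_{n \to \infty} \frac{1}{n^p} \sum_{k=1}^n X_k = \limsup_{n \to \infty} \frac{1}{n^p} \sum_{k=1}^n \Ex[X_k] \quad \text{a.s.,} $$
which is the assertion. There is essentially no obstacle: the corollary is an immediate consequence of the preceding lemma, and the only points requiring the slightest care are the uniform bound on the centered sequence (handled by $|\Ex[X_k]| \le \Ex[|X_k|]$) and the standard fact that $\limsup(a_n + b_n) = \limsup b_n$ whenever $a_n \to 0$, both of which are routine.
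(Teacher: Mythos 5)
Your proof is correct and follows exactly the route the paper takes: the paper's own proof is the one-line ``Letting $W_k := X_k - \Ex[X_k]$, the result follows from Lemma~\ref{lem:zeromean}.'' You have simply made explicit the routine verifications (independence, the uniform bound $|W_k| < 2M$, and the $\limsup$ invariance under an a.s.\ null perturbation) that the paper leaves implicit.
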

\begin{proof}
	Letting $W_k := X_k - E[X_k]$, the result follows from Lemma \ref{lem:zeromean}. 
\end{proof}

\begin{lemma}
	\label{lem:infproducts}
	Let $(Z_k)_{k\in\N}$ be a sequence of independent random variables for which there exist $\alpha,\beta > 0$ such that $Z_k \in [\alpha,\beta]$ for all $k$. If there exists $p > \frac{1}{2}$ such that \begin{equation} \limsup_{n \to \infty} \dfrac{1}{n^p} \sum_{k=1}^n \Ex[\ln Z_k] < 0, \label{eq:ElnZk} \end{equation} then $$ \lim_{n \to \infty} \prod_{k=1}^n Z_k = 0 \text{ a.s. } $$
\end{lemma}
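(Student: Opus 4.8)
The plan is to reduce the infinite product to an infinite sum by taking logarithms, and then to apply Corollary \ref{cor:limsups} to the summands $X_k := \ln Z_k$. Since $Z_k \in [\alpha,\beta]$ with $\alpha,\beta > 0$, each $\ln Z_k$ lies in the bounded interval $[\ln\alpha,\ln\beta]$, so the $X_k$ are independent and bounded --- exactly the hypotheses required by Corollary \ref{cor:limsups}. Note that it is the lower bound $\alpha > 0$ that keeps $\ln Z_k$ bounded below, and the upper bound $\beta < \infty$ that keeps it bounded above.

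First I would write $\prod_{k=1}^n Z_k = \exp\!\big(\sum_{k=1}^n \ln Z_k\big)$, so that it suffices to establish $\sum_{k=1}^n \ln Z_k \to -\infty$ almost surely; the claim then follows on exponentiating, using continuity of $\exp$ together with $\lim_{x\to-\infty}\exp(x)=0$. Next I would invoke Corollary \ref{cor:limsups} with $X_k = \ln Z_k$ and the given $p > \tfrac12$ to obtain, almost surely,
$$ \limsup_{n\to\infty}\frac{1}{n^p}\sum_{k=1}^n \ln Z_k \;=\; \limsup_{n\to\infty}\frac{1}{n^p}\sum_{k=1}^n \Ex[\ln Z_k], $$
whose right-hand side is strictly negative by hypothesis \eqref{eq:ElnZk}. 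Denoting this common value by $-c$ with $c>0$, on the almost-sure event where the equality holds there exist $\delta \in (0,c)$ and an index $N$ such that $\frac{1}{n^p}\sum_{k=1}^n \ln Z_k < -\delta$ for all $n \geq N$, that is $\sum_{k=1}^n \ln Z_k < -\delta\, n^p$.

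Finally, since $p > \tfrac12 > 0$ forces $n^p \to \infty$, the bound $\sum_{k=1}^n \ln Z_k < -\delta\, n^p$ shows $\sum_{k=1}^n \ln Z_k \to -\infty$ almost surely, and exponentiating yields $\prod_{k=1}^n Z_k \to 0$ almost surely. I do not expect a genuine obstacle here: the substantive probabilistic content --- the strong-law-type control of the random partial sums by their expectations --- has already been isolated in Lemma \ref{lem:zeromean} and packaged in Corollary \ref{cor:limsups}. The only points requiring care are verifying the boundedness of $\ln Z_k$ so that Corollary \ref{cor:limsups} applies, and observing that the passage from a strictly negative normalised $\limsup$ to honest divergence of the unnormalised sum to $-\infty$ uses only that $n^p$ is unbounded, i.e. that $p>0$, rather than the sharper threshold $p>\tfrac12$ (which is what Corollary \ref{cor:limsups} itself consumes).
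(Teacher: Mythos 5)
Your proposal is correct and follows essentially the same route as the paper's proof: take logarithms, apply Corollary \ref{cor:limsups} to the bounded independent variables $\ln Z_k \in [\ln\alpha,\ln\beta]$, and conclude from the strictly negative normalised $\limsup$ that $\sum_{k=1}^n \ln Z_k < -\delta n^p \to -\infty$ almost surely. Your observation that only $p>0$ is needed in the final divergence step (the threshold $p>\tfrac12$ being consumed by Corollary \ref{cor:limsups}) is accurate, and you also implicitly correct a small typo in the paper, which asserts $[\ln\alpha,\ln\beta]\subset(0,\infty)$ when boundedness is all that is needed or true.
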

\begin{proof}
	The random variable $\ln Z_k \in [ \ln \alpha, \ln \beta] \subset (0,\infty)$. For any $p > \frac{1}{2}$, 
	\begin{align*} \limsup_{n \to \infty} \frac{1}{n^p} \ln \left( \prod_{k=1}^n Z_k \right) & = \limsup_{n \to \infty} \frac{1}{n^p} \sum_{k=1}^n \ln Z_k \\
	& = \limsup_{n \to \infty} \frac{1}{n^p} \sum_{k=1}^n \Ex[ \ln Z_k ] \text{ a.s. }\end{align*}
	by Corollary \ref{cor:limsups}. Suppose \eqref{eq:ElnZk} holds, and fix a realisation $(Z_k)_{k \in \N} $ for which $$c := \limsup_{n \to \infty} \frac{1}{n^p} \sum_{k=1}^n \Ex[ \ln Z_k ]< 0.$$
	This implies there exists $N \in \N$ such that $\frac{1}{n^p} \ln \left( \prod_{k=1}^n Z_k \right)< \frac{c}{2} < 0$ for all $n > N$, and thus 
	$$\forall n > N,\  \ln \left( \prod_{k=1}^n Z_k \right) < \dfrac{n^p c}{2}<0.$$ This in turn implies $\ln \left( \prod_{k=1}^n Z_k \right) \to - \infty$ as $n \to \infty$, which yields the result. 
\end{proof}

\begin{lemma}
	\label{lem:hoeffding}
	Let $(Z_k)_{k\in\N}$ be a sequence of independent random variables for which there exist $\alpha,\beta > 0$ such that $Z_k \in [\alpha,\beta]$ for all $k$.  Let $\gamma_n :=  \sum_{k=1}^n \Ex \left[ \ln Z_k\right]$. If 
	\begin{equation} \frac{\gamma_n}{\sqrt n} \to - \infty, \label{eq:gammadown} \end{equation}
	then for any $\epsilon > 0$, there exists $K \in \N$ such that
	\begin{equation}\forall n > K,\ \mathrm{ Pr} \left( \prod_{k=1}^n Z_k \geq \epsilon \right) \leq \exp \left( -\dfrac{2 \left( \ln\epsilon - \gamma_n \right)^2}{n( \ln\beta - \ln \alpha)^2} \right). \label{eq:rate} \end{equation}
\end{lemma}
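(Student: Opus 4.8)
The plan is to recognise this as a direct application of Hoeffding's inequality after a logarithmic transformation. First I would set $Y_k := \ln Z_k$ and $S_n := \sum_{k=1}^n Y_k$, observing that the $Y_k$ are independent (since the $Z_k$ are) and uniformly bounded: $Y_k \in [\ln\alpha, \ln\beta]$, an interval of width $\ln\beta - \ln\alpha$. Because $\ln$ is increasing, the event $\{ \prod_{k=1}^n Z_k \geq \epsilon \}$ coincides with $\{ S_n \geq \ln\epsilon \}$, and $\Ex[S_n] = \sum_{k=1}^n \Ex[\ln Z_k] = \gamma_n$ by definition of $\gamma_n$. Hence the probability to be bounded is exactly $\mathrm{Pr}\big( S_n - \gamma_n \geq \ln\epsilon - \gamma_n \big)$.

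Next I would deal with the sign of the deviation $t := \ln\epsilon - \gamma_n$, since the upper-tail form of Hoeffding's inequality requires $t \geq 0$. The hypothesis \eqref{eq:gammadown}, $\gamma_n/\sqrt{n} \to -\infty$, implies in particular $\gamma_n \to -\infty$, so there exists $K \in \N$ such that $\gamma_n < \ln\epsilon$ for all $n > K$. This is precisely the constant $K$ appearing in the statement, and for every such $n$ we have $t = \ln\epsilon - \gamma_n > 0$, so the inequality may legitimately be applied.

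Finally, for $n > K$ I would invoke Hoeffding's inequality for a sum of independent random variables each confined to an interval of common width $\ln\beta - \ln\alpha$, namely $\mathrm{Pr}\big( S_n - \Ex[S_n] \geq t \big) \leq \exp\big( -2t^2 / \sum_{k=1}^n (\ln\beta - \ln\alpha)^2 \big)$. The denominator evaluates to $n(\ln\beta - \ln\alpha)^2$, and substituting $\Ex[S_n] = \gamma_n$ together with $t = \ln\epsilon - \gamma_n$ recovers exactly the bound \eqref{eq:rate}.

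I do not anticipate a serious obstacle: the only delicate point is guaranteeing $t > 0$ before applying Hoeffding, which is exactly why the threshold $\gamma_n < \ln\epsilon$, and hence the index $K$, must enter the statement. It is worth noting that the full strength of $\gamma_n/\sqrt{n} \to -\infty$ is not needed merely to certify the inequality for large $n$ — the weaker $\gamma_n \to -\infty$ already supplies the required $K$ — but the stronger hypothesis ensures $(\ln\epsilon - \gamma_n)/\sqrt{n} \to +\infty$, so the right-hand side of \eqref{eq:rate} decays to zero, making the bound informative and yielding convergence of $\prod_{k=1}^n Z_k$ to zero in probability at an explicit exponential rate.
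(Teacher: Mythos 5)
Your proposal is correct and follows essentially the same route as the paper's proof: taking logarithms to reduce to $S_n = \sum_{k=1}^n \ln Z_k$ with $\Ex[S_n] = \gamma_n$, using $\gamma_n \to -\infty$ to secure a positive deviation $\ln\epsilon - \gamma_n$ for $n > K$, and then applying Hoeffding's inequality for sums of independent variables bounded in $[\ln\alpha, \ln\beta]$. Your closing remark that only $\gamma_n \to -\infty$ is needed for the stated bound, while the stronger hypothesis $\gamma_n/\sqrt{n} \to -\infty$ is what makes the right-hand side of \eqref{eq:rate} vanish, is a correct and worthwhile observation, though it does not change the argument.
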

\begin{proof}
	Let $X_k := \ln( Z_k)$, and define $S_n := \sum_{k=1}^n X_k$. The $X_k$ are independent, and therefore $$\Ex[ S_n] = \sum_{k=1}^n \Ex \left[ \ln Z_k\right]= \gamma_n.$$
	Note that $\prod_{k=1}^n Z_k = \exp(S_n)$, and therefore
	$$ \left( \prod_{k=1}^n Z_k \geq \epsilon \right) \iff \left( S_n \geq \ln \epsilon \right), $$ where $\epsilon > 0$. Therefore for any $\epsilon > 0$,
	\begin{equation} \mathrm{ Pr} \left(  \prod_{k=1}^n Z_k \geq \epsilon \right) = \mathrm{ Pr} \left( \frac{S_n - \gamma_n}{n} \geq \frac{\ln \epsilon  - \gamma_n}{n} \right). \label{eq:subsetineq}  \end{equation}
	If \eqref{eq:gammadown} holds, then
	$$ \forall \epsilon > 0, \ \exists K \in \mathbb{N} \text{ s.t. } \forall n > K, \ \gamma_n < \ln \epsilon ,$$ and therefore $\dfrac{ \ln \epsilon - \gamma_n}{n} > 0$ for all $n > K$. Furthermore, \hbox{$X_k \in [ \ln(\alpha), \ln( \beta)]$} for all $k$. We can therefore apply Hoeffding's inequality \cite[Theorem 2]{hoeffding1963} for all $n \geq K$: 
	$$ \mathrm{ Pr} \left( \frac{S_n - \gamma_n}{n} \geq \frac{\ln(\epsilon) - \gamma_n}{n} \right) \leq \exp \left( -\dfrac{2( \ln\epsilon - \gamma_n)^2}{n( \ln\beta - \ln \alpha)^2} \right).$$
	Combining this with \eqref{eq:subsetineq} gives us \eqref{eq:rate}. 
\end{proof}
\section{Fisher Information}
\label{app:Fish}
A general expression for the FIM in Section \ref{sec:fish} is derived below, based on the likelihood function $g$ defined in \eqref{eq:gfun}.
The log-likelihood gradient is given by
\begin{align*}
\nabla_\s \ln g(d_k \mid \s ; \bx_k) & = \frac{ \nabla_\s g(d_k \mid \s ; \bx_k)}{g(d_k \mid \s ; \bx_k)} \\
& = \left[ \frac{ \nabla_\s \ell(\s,\bx_k)}{\ell(\s,\bx_k)}\right]^{d_k} \left[\frac{ \nabla_\s \ell(\s,\bx_k)}{\ell(\s,\bx_k) - 1} \right]^{1 - d_k} \\
& = \frac{ \nabla_\s \ell(\s,\bx_k)}{(-1)^{1 - d_k}[\ell(\s,\bx_k)]^{d_k}[1-\ell(\s,\bx_k)]^{1 - d_k}},
\end{align*}
and its Hessian,
\begin{dmath*}
	{ \nabla^2_\s \ln g(d_k \mid \s ; \bx_k)}  = {\frac{ \partial}{\partial \s}\left[ \nabla_\s \ln g(d_k \mid \s ; \bx_k) \right]} \\
	=  {\left[ \frac{ \ell(\s,\bx_k) \nabla^2_\s \ell(\s,\bx_k) - \nabla_\s \ell(\s,\bx_k)\nabla^\top_\s \ell(\s,\bx_k)  }{\ell(\s,\bx_k)^2}\right]^{d_k}} \cdot {\left[ \frac{ [\ell(\s,\bx_k) - 1] \nabla^2_\s \ell(\s,\bx_k) - \nabla_\s \ell(\s,\bx_k)\nabla^\top_\s \ell(\s,\bx_k)  }{[ 1 - \ell(\s,\bx_k)]^2}\right]^{1-d_k} } \\
	= {\dfrac{\nabla^2_\s \ell(\s,\bx_k)}{(-1)^{1 - d_k}[\ell(\s,\bx_k)]^{d_k}[1-\ell(\s,\bx_k)]^{1 - d_k}}}\\ {-
		\dfrac{\nabla_\s \ell(\s,\bx_k)\nabla_\s \ell(\s,\bx_k)^\top}{[\ell(\s,\bx_k)]^{2d_k}[1-\ell(\s,\bx_k)]^{2(1 - d_k)}}}.
\end{dmath*}
The FIM for a single reading is then
\begin{align*}
& \I(\s;\bx_k)  = - \Ex \left[ \nabla^2_\s \ln g(d_k \mid \s ; \bx_k) \right] \\
 & = -\sum_{d\in \{0,1\}}{g(d \mid \s;\bx_k)} \nabla_\s^2 \ln {g(d \mid \s;\bx_k)} \\
& = -  \ell(\s,\bx_k)  \left[
\frac{ \ell(\s,\bx_k) \nabla^2_\s \ell(\s,\bx_k) - \nabla_\s \ell(\s,\bx_k)\nabla^\top_\s \ell(\s,\bx_k)  }{\ell(\s,\bx_k)^2} \right] \\
& - [1 - \ell(\s,\bx_k)] \left[ \frac{ [\ell(\s,\bx_k) - 1] \nabla^2_\s \ell(\s,\bx_k) - \nabla_\s \ell(\s,\bx_k)\nabla^\top_\s \ell(\s,\bx_k)  }{[ 1 - \ell(\s,\bx_k)]^2} \right] \\
& = \dfrac{\nabla_\s \ell(\s,\bx_k)\nabla_\s \ell(\s,\bx_k)^\top}{\ell(\s,\bx_k)[1 - \ell(\s,\bx_k)]}.
\end{align*}
\end{document}